\newcommand{\CC}{{\mathbb{C}}}
\def\T{{\mathcal T}}
\newcommand{\p}{{\partial}}
\newcommand{\mbC}{\mathbb C}
\newcommand{\oM}{\overline{\mathcal M}}
\def\mbQ{{\mathbb Q}}
\def\d{{\partial}}
\newcommand{\<}{\left<}
\renewcommand{\>}{\right>}
\newcommand{\Coef}{\mathrm{Coef}}
\newcommand{\mcL}{\mathcal{L}}
\newcommand{\mcO}{\mathcal{O}}
\newcommand{\tD}{\widetilde{D}}
\newcommand{\talpha}{{\widetilde{\alpha}}}
\newcommand{\tbeta}{{\widetilde{\beta}}}
\newcommand{\ext}{\mathrm{ext}}
\newcommand{\mcA}{\mathcal{A}}
\newcommand{\mcT}{\mathcal{T}}
\newcommand{\alg}{\mathrm{alg}}
\newcommand{\rspin}{\text{$r$-spin}}
\newcommand{\oalpha}{{\overline{\alpha}}}
\newcommand{\hmcA}{{\widehat{\mathcal{A}}}}
\newtheorem{theorem}{Theorem}[section]
\newtheorem{lemma}[theorem]{Lemma}
\newtheorem{corollary}[theorem]{Corollary}
\newtheorem{definition}[theorem]{Definition}
\newtheorem{example}[theorem]{Example}
\newtheorem{remark}[theorem]{Remark}
\numberwithin{equation}{section}
\begin{document}

\title{Open Saito theory for $A$ and $D$ singularities}

\author{Alexey Basalaev}
\address{A. Basalaev:\newline Faculty of Mathematics, National Research University Higher School of Economics, Usacheva str., 6, 119048 Moscow, Russian Federation, and \newline
Skolkovo Institute of Science and Technology, Nobelya str., 3, 121205 Moscow, Russian Federation}
\email{a.basalaev@skoltech.ru}

\author{Alexandr Buryak}
\address{A. Buryak:\newline School of Mathematics, University of Leeds, Leeds, LS2 9JT, United Kingdom}
\email{a.buryak@leeds.ac.uk}

\date{\today}

\begin{abstract}
A well known construction of B.~Dubrovin and K.~Saito endows the parameter space of a universal unfolding of a simple singularity with a Frobenius manifold structure. In our paper we present a generalization of this construction for the singularities of types~$A$ and~$D$, that gives a solution of the open WDVV equations. For the $A$-singularity the resulting solution describes the intersection numbers on the moduli space of $r$-spin disks, introduced recently in a work of the second author, E.~Clader and R.~Tessler. In the second part of the paper we describe the space of homogeneous polynomial solutions of the open WDVV equations associated to the Frobenius manifolds of finite irreducible Coxeter groups.  
\end{abstract}

\maketitle

\setcounter{tocdepth}{1}
\tableofcontents

\section{Introduction}

Frobenius manifolds, introduced by B.~Dubrovin in the early 90s, gave a geometric approach to study solutions of the {\it WDVV equations}
\begin{gather}\label{eq:WDVV equations}
\frac{\d^3 F}{\d t^\alpha\d t^\beta\d t^\mu}\eta^{\mu\nu}\frac{\d^3 F}{\d t^\nu\d t^\gamma\d t^\delta}=\frac{\d^3 F}{\d t^\delta\d t^\beta\d t^\mu}\eta^{\mu\nu}\frac{\d^3 F}{\d t^\nu\d t^\gamma\d t^\alpha},\quad 1\le\alpha,\beta,\gamma,\delta\le N,
\end{gather}
where $F = F(t^1,\dots,t^N)$ is an analytic function defined on some open subset $M\subset\mbC^N$, $\eta=(\eta_{\alpha\beta})$ is an $N\times N$ symmetric non-degenerate matrix with complex coefficients, $(\eta^{\alpha\beta}):=\eta^{-1}$ and we use the convention of sum over repeated Greek indices. The WDVV equations appear in many areas of mathematics, including singularity theory and curve counting theories in algebraic geometry. In Gromov--Witten theory the WDVV equations describe the structure of primary Gromov--Witten invariants in genus $0$ and naturally come from a certain relation in the cohomology of the moduli space of stable curves.

\bigskip

Suppose that a function $F$ satisfies the WDVV equations together with the additional assumption 
\begin{gather}\label{eq:unit condition for F}
\frac{\d^3 F}{\d t^1\d t^\alpha\d t^\beta}=\eta_{\alpha\beta}.
\end{gather}
The function $F$ defines a commutative product $\circ$ on each tangent space $T_p M$ by
\[
  \frac{\p}{\p t^\alpha} \circ \frac{\p}{\p t^\beta}:=\frac{\p^3F}{\p t^\alpha \p t^\beta \p t^\gamma} \eta^{\gamma \delta} \frac{\p}{\p t^\delta}, \quad 1 \le \alpha,\beta \le N.
\]
One can immediately see that the WDVV equations are equivalent to the associativity of this product and property~\eqref{eq:unit condition for F} means that the vector field $\frac{\d}{\d t^1}$ is the unit. One can go in the opposite direction and consider a manifold with a commutative, associative algebra structure and a symmetric, non-degenerate bilinear form on each tangent space. Under  certain conditions such a manifold in special local coordinates, called the {\it flat coordinates}, can be described by a solution~$F$ of the WDVV equations, satisfying property~\eqref{eq:unit condition for F}. The conditions, needed for the existence of a function~$F$, were systematically studied by B.~Dubrovin~\cite{Dub96,Dub99}, who called manifolds, satisfying these conditions, {\it Frobenius manifolds}. The function $F$ is then called a {\it Frobenius manifold potential}. The bilinear form is traditionally called a {\it metric}.

\bigskip

In his fundamental works~\cite{Sai82,Sai83} K.~Saito constructed a flat metric on the parameter space of a universal unfolding of any simple singularity. B.~Dubrovin~\cite{Dub98} then proved that together with a certain commutative, associative algebra structure on each tangent space this metric defines a Frobenius manifold structure on the parameter space of the universal unfolding. These Frobenius manifolds are often called the {\it Saito Frobenius manifolds}. Remarkably, the same Frobenius manifolds appear in the study of the geometry of the moduli spaces of algebraic curves with certain additional structures, the so-called Fan--Jarvis--Ruan--Witten (FJRW) theory~\cite{FJR13}. This is one of the manifestations of mirror symmetry. 

\bigskip 

In the same way, as the WDVV equations appeared in Gromov--Witten theory, another system of non-linear PDEs, called the \textit{open WDVV equations}, appeared more recently in open Gromov--Witten theory~\cite[Theorem~2.7]{HS12} (see also~\cite{PST14,BCT18,BCT19}). Let $F = F(t^1,\dots,t^N)$ be a solution of the WDVV equations~\eqref{eq:WDVV equations}, satisfying condition~\eqref{eq:unit condition for F}. The open WDVV equations associated to $F$ are the following PDEs for a function $F^o = F^o(t^1,\ldots,t^N,s)$, depending on an additional variable $s$:
\begin{align}
\frac{\d^3F}{\d t^\alpha\d t^\beta\d t^\mu}\eta^{\mu\nu}\frac{\d^2F^o}{\d t^\nu\d t^\gamma}+\frac{\d^2F^o}{\d t^\alpha\d t^\beta}\frac{\d^2F^o}{\d s\d t^\gamma}=&\frac{\d^3F}{\d t^\gamma\d t^\beta\d t^\mu}\eta^{\mu\nu}\frac{\d^2F^o}{\d t^\nu\d t^\alpha}+\frac{\d^2F^o}{\d t^\gamma\d t^\beta}\frac{\d^2F^o}{\d s\d t^\alpha},&&1\le\alpha,\beta,\gamma\le N,\label{eq:open WDVV,1}\\
\frac{\d^3F}{\d t^\alpha\d t^\beta\d t^\mu}\eta^{\mu\nu}\frac{\d^2F^o}{\d t^\nu\d s}+\frac{\d^2F^o}{\d t^\alpha\d t^\beta}\frac{\d^2F^o}{\d s^2}=&\frac{\d^2F^o}{\d s\d t^\beta}\frac{\d^2F^o}{\d s\d t^\alpha},&&1\le\alpha,\beta\le N.\label{eq:open WDVV,2}
\end{align}
Solutions of equations~\eqref{eq:open WDVV,1},~\eqref{eq:open WDVV,2}, relevant in open Gromov-Witten theory and also in the works~\cite{PST14,BCT18,BCT19}, satisfy the additional condition
\begin{gather}\label{eq:unit condition for Fo}
\frac{\d^2 F^o}{\d t^1\d t^\alpha}=0,\qquad \frac{\d^2 F^o}{\d t^1\d s}=1.
\end{gather}

\bigskip

The solutions of the open WDVV equations from the works~\cite{BCT18,BCT19} are associated to the Saito Frobenius manifold of the $A$-singularity and they were constructed using ideas of FJRW theory. So it is natural to ask whether the Dubrovin--Saito construction of the Frobenius manifolds corresponding to simple singularities admits a generalization, that produces solutions of the open WDVV equations. In our paper we present such a generalization for the singularities of types $A$ and $D$. For the $A$-singularity our construction gives a polynomial solution that coincides with the one from~\cite{BCT18,BCT19}. For the $D$-singularity our solution has a simple pole along the variable $s$. 

\bigskip

Additionally, in both $A$- and $D$-cases our solution of the open WDVV equations has the following remarkable feature. The Saito Frobenius manifold of a simple singularity has two natural coordinate systems. The first one is given by the parameters of a universal unfolding of a simple singularity. The second coordinate system is given by the flat coordinates of the metric. We show that for the singularities $A$ and $D$ the transition functions between these two coordinate systems coincide with the coefficients of powers of the variable~$s$ in the expansion of our solution of the open WDVV equations.    

\bigskip

The Saito Frobenius manifolds of simple singularities together with their certain submanifolds form a class of Frobenius manifolds, that is, via a construction of B.~Dubrovin~\cite{Dub98}, in a natural bijection with the class of finite irreducible Coxeter groups (see also~\cite{Zub94}). This class of Frobenius manifolds plays a fundamental role in the theory of Frobenius manifolds, because of the following result of C.~Hertling, conjectured by B.~Dubrovin~\cite{Dub98}. Recall that a Frobenius manifold potential $F$ is called homogeneous, if there exists a vector field $E$ of the form
\begin{gather}\label{eq:Euler field for Frobenius}
E=\sum_{\alpha=1}^N\underbrace{(q_\alpha t^\alpha+r^\alpha)}_{=:E^\alpha}\frac{\d}{\d t^\alpha},\quad q_\alpha,r^\alpha\in\mbC,\quad q_1=1,
\end{gather}
such that
\begin{gather*}
E(F)=E^\alpha\frac{\d F}{\d t^\alpha}=(3-\delta)F+\frac{1}{2}A_{\alpha\beta}t^\alpha t^\beta+B_\alpha t^\alpha+C,\quad\text{for some $\delta,A_{\alpha\beta},B_\alpha,C\in\mbC$}.
\end{gather*}
The number $\delta$ is called the {\it conformal dimension} and the vector field $E$ is called the {\it Euler vector field}. C.~Hertling proved that any generically semisimple Frobenius manifold (see Section~\ref{subsection:flat F-manifolds} for definition), whose potential is polynomial $F \in \CC[t^1,\dots,t^N]$ and homogeneous with the Euler vector field of the form~$E=q_\alpha t^\alpha\frac{\d}{\d t^\alpha}$, where $q_\alpha>0$, can be expressed as the product of the Frobenius manifolds corresponding to finite irreducible Coxeter groups~\cite[Theorem~5.25]{Hert02}.

\bigskip

In the second part of the paper we study the space of polynomial solutions of the open WDVV equations associated to the Frobenius manifolds of finite irreducible Coxeter groups. Note that all solutions of the open WDVV equations, considered in the works~\cite{HS12,PST14,BCT18,BCT19}, are associated to a homogeneous Frobenius potential~$F$ and, moreover, the function $F^o$ satisfies the homogeneity condition
\begin{gather}\label{eq:homogeneity for Fo}
E^\alpha\frac{\d F^o}{\d t^\alpha}+\frac{1-\delta}{2}s\frac{\d F^o}{\d s}=\frac{3-\delta}{2}F^o+D_\alpha t^\alpha+\tD s+E,\quad\text{for some $D_\alpha,\tD,E\in\mbC$}.
\end{gather}
We see that the degree of the variable~$s$ is determined by the conformal dimension of the Frobenius manifold. We will call a solution of the open WDVV equations homogeneous, if it satisfies condition~\eqref{eq:homogeneity for Fo}.

\bigskip 

In our paper we describe the space of homogeneous polynomial solutions of the open WDVV equations associated to the Frobenius manifolds of {\it all} finite irreducible Coxeter groups. In particular, this space is non-empty only for the Coxeter groups~$A_N$,~$B_N$ and~$I_2(k)$. 

\bigskip

Our approach to study solutions of the open WDVV equations is based on the following crucial observation of P. Rossi. Let $F = F(t^1,\ldots,t^N)$ be a Frobenius manifold potential and $F^o=F^o(t^1,\ldots,t^N,s)$ be a solution of the open WDVV equations satisfying~\eqref{eq:unit condition for Fo}. Then the $(N+1)$-tuple of functions~$\left(\eta^{1\mu}\frac{\d F}{\d t^\mu},\ldots,\eta^{N\mu}\frac{\d F}{\d t^\mu},F^o\right)$ forms a {\it vector potential} of a {\it flat F-manifold}. This allows us to use the theory of flat F-manifolds to study solutions of the open WDVV equations.

\subsection*{Acknowledgements}

We would like to thank Claus Hertling for useful discussions. 

This project has received funding from the European Union's Horizon 2020 research and innovation programme under the Marie Sk\l odowska-Curie grant agreement No. 797635. The first named author was supported by RSF Grant No. 19-71-00086.


\section{Flat F-manifolds and Frobenius manifolds}\label{section:definitions}

In this section we recall the definitions and the main properties of flat F-manifolds and Frobenius manifolds. We also explain how solutions of the open WDVV equations correspond to flat F-manifolds of special type.

\subsection{Flat F-manifolds}\label{subsection:flat F-manifolds}

The notion of a flat F-manifold was introduced in~\cite{Man05}. 

\begin{definition}
A flat F-manifold $(M,\nabla,\circ)$ is the datum of a complex analytic manifold $M$, an analytic connection $\nabla$ in the tangent bundle $T M$, an algebra structure $(T_p M,\circ)$ with unit $e$ on each tangent space analytically depending on the point $p\in M$ such that the one-parameter family of connections $\nabla+z\circ$ is flat and torsionless for any $z\in\mbC$, and $\nabla e=0$.
\end{definition}

For a flat F-manifold $(M,\nabla,\circ)$ consider flat coordinates $t^\alpha$, $1\le\alpha\le N$, $N=\dim M$, for the connection $\nabla$ such that $e = \frac{\d}{\d t^1}$. Then locally there exist analytic functions $F^\alpha(t^1,\ldots,t^N)$, $1\leq\alpha\leq N$, such that the second derivatives 
\begin{gather}\label{eq:structure constants of flat F-man}
c^\alpha_{\beta\gamma}=\frac{\d^2 F^\alpha}{\d t^\beta \d t^\gamma}
\end{gather}
give the structure constants for the multiplication $\circ$,
\begin{gather*}
\frac{\d}{\d t^\beta}\circ\frac{\d}{\d t^\gamma}=c^\alpha_{\beta\gamma}\frac{\d}{\d t^\alpha}.
\end{gather*}
From the associativity of the multiplication and the fact that the vector field $\frac{\d}{\d t^1}$ is the unit it follows that
\begin{align}
\frac{\d^2 F^\alpha}{\d t^1\d t^\beta} &= \delta^\alpha_\beta, && 1\leq \alpha,\beta\leq N,\label{eq:axiom1 of flat F-man}\\
\frac{\d^2 F^\alpha}{\d t^\beta \d t^\mu} \frac{\d^2 F^\mu}{\d t^\gamma \d t^\delta} &= \frac{\d^2 F^\alpha}{\d t^\gamma \d t^\mu} \frac{\d^2 F^\mu}{\d t^\beta \d t^\delta}, && 1\leq \alpha,\beta,\gamma,\delta\leq N.\label{eq:axiom2 of flat F-man}
\end{align}
The $N$-tuple of functions $(F^1,\ldots,F^N)$ is called the {\it vector potential} of our flat F-manifold.

Conversely, if $M$ is an open subset of $\mbC^N$ and $F^1,\ldots,F^N\in\mcO(M)$ are functions satisfying equations~\eqref{eq:axiom1 of flat F-man} and~\eqref{eq:axiom2 of flat F-man}, then these functions define a flat F-manifold $(M,\nabla,\circ)$ with the connection~$\nabla$, given by $\nabla_{\frac{\d}{\d t^\alpha}}\frac{\d}{\d t^\beta}=0$, and the multiplication $\circ$, given by the structure constants~\eqref{eq:structure constants of flat F-man}. 

\begin{remark}\label{remark:algebraic construction of multiplication}
Let $M\subset\mbC^N$ be an open subset in the Zariski topology. The tangent spaces $T_p M$ can be endowed with an algebra structure, algebraically depending on the point $p\in M$, using the following construction. Denote by $\mcO^\alg$ the sheaf of algebraic functions on $M$. Let $R$ be an $\mcO^\alg(M)$-algebra, which is free as an $\mcO^\alg(M)$-module with a basis $\phi_1,\ldots,\phi_N\in R$. Denote by $v_1,\ldots,v_N$ the standard coordinates on $\mbC^N$ and by $\mcT_M^\alg$ the algebraic tangent sheaf of $M$. Define an isomorphism of $\mcO^\alg(M)$-modules $\Psi\colon\mcT_M^\alg(M)\to R$ by $\Psi(\frac{\d}{\d v_i}):=\phi_i$. Thus, the sheaf $\mcT_M^\alg$ becomes a sheaf of $\mcO^\alg$-algebras, that endows the tangent spaces $T_p M$ with an algebra structure algebraically depending on the point $p\in M$.  
\end{remark}

\begin{remark}
Consider an analytic manifold $M$ with an algebra structure $(T_p M,\circ)$ on each tangent space analytically depending on the point $p\in M$. We see that a connection $\nabla$, endowing our manifold $M$ with a flat F-manifold structure, can be completely described by a choice of coordinates $t^1,\ldots,t^N$ on $M$ such that the structure constants $c^\alpha_{\beta\gamma}$ of multiplication in these coordinates satisfy the integrability condition 
$$
\frac{\d c^\alpha_{\beta\gamma}}{\d t^\sigma}=\frac{\d c^\alpha_{\beta\sigma}}{\d t^\gamma}
$$
together with the condition $c^\alpha_{1,\beta}=\delta^\alpha_\beta$. In this paper we will construct flat $F$-manifolds exactly by presenting flat coordinates as above.
\end{remark}

A flat F-manifold $(M,\nabla,\circ)$ is called {\it conformal}, if it is equipped with a vector field $E$, called the {\it Euler vector field}, such that $\nabla\nabla E=0$, $[e,E]=e$ and $\mcL_E(\circ)=\circ$. This means that in the flat coordinates the Euler vector field $E$ has the form
$$
E=\underbrace{(q^\alpha_\beta t^\beta+r^\alpha)}_{=:E^\alpha}\frac{\d}{\d t^\alpha},\quad q^\alpha_\beta,r^\alpha\in\mbC,\quad q^\alpha_1=\delta^\alpha_1,
$$
and the vector potential $(F^1,\ldots,F^N)$ satisfies the condition
\begin{gather}\label{eq:homogeneity for F-man}
E^\mu\frac{\d F^\alpha}{\d t^\mu}=q^\alpha_\beta F^\beta+F^\alpha+A^\alpha_\beta t^\beta+B^\alpha,\quad\text{for some $A^\alpha_\beta,B^\alpha\in\mbC$}.
\end{gather}

A point $p\in M$ of an $N$-dimensional flat F-manifold $(M,\nabla,\circ)$ is called \textit{semisimple} if $T_pM$ has a basis of idempotents $\pi_1,\dots,\pi_N$ satisfying $\pi_k \circ \pi_l = \delta_{k,l} \pi_k$. Moreover, locally around such a point one can choose coordinates $u^i$ such that $\frac{\d}{\d u^k}\circ\frac{\d}{\d u^l}=\delta_{k,l}\frac{\d}{\d u^k}$. These coordinates are called the {\it canonical coordinates}. In particular, this means that the semisimplicity is an open property. The flat F-manifold $(M,\nabla,\circ)$ is called semisimple, if a generic point of $M$ is semisimple.

\subsection{Frobenius manifolds}

For a complex analytic manifold $M$ we denote by $\mcT_M$ the analytic tangent sheaf of $M$.

\begin{definition}
A flat F-manifold $(M,\nabla,\circ)$ is called a Frobenius manifold if the tangent spaces~$T_pM$ are equipped with a symmetric non-degenerate bilinear form~$\eta$ analytically depending on the point $p\in M$ such that $\nabla \eta = 0$ and for any $X,Y,Z \in \T_M$ the following condition is satisfied:
$$
\eta(X\circ Y, Z) = \eta(X,Y\circ Z).
$$
The connection $\nabla$ is then the Levi-Civita connection associated to the form~$\eta$. A Frobenius manifold will be denoted by the triple $(M,\eta,\circ)$. The form $\eta$ is traditionally called a metric.
\end{definition}

Let $(M,\eta,\circ)$ be a Frobenius manifold and consider the flat coordinates $t^1,\ldots,t^N$ of the metric~$\eta$ and the vector potential $(F^1,\ldots,F^N)$. Then locally there exists an analytic function~$F$ such that $F^\alpha = \eta^{\alpha\beta}\frac{\d F}{\d t^\beta}$ and $\frac{\d^3 F}{\d t^1\d t^\alpha\d t^\beta}=\eta_{\alpha\beta}$, where $(\eta_{\alpha\beta})$ is the matrix of the form $\eta$ in the coordinates $t^1,\ldots,t^N$. The function $F$ satisfies the WDVV equations~\eqref{eq:WDVV equations} and is called the Frobenius manifold potential.

A Frobenius manifold $(M,\eta,\circ)$ is called conformal if the corresponding flat F-manifold is conformal and the metric $\eta$ satisfies the condition
$$
\mcL_E\eta=(2-\delta)\eta,\quad\text{for some $\delta\in\mbC$},
$$
where $\mcL_E$ denotes the Lie derivative. The number $\delta$ is then called the conformal dimension of the Frobenius manifold. The Frobenius manifold potential $F$ satisfies then the condition
$$
E(F) = (3-\delta) F + \frac{1}{2}A_{\alpha\beta}t^\alpha t^\beta+B_\alpha t^\alpha+C,\quad\text{for some $A_{\alpha\beta},B_\alpha,C\in\mbC$}.
$$
In the theory of Frobenius manifolds it is typically assumed that one can choose flat coordinates such that the matrix $\left(\frac{\d E^\alpha}{\d t^\beta}\right)$ is diagonal and so the Euler vector field has form~\eqref{eq:Euler field for Frobenius}.

The papers~\cite{Dub96,Dub99} contain a systematic study of the theory of Frobenius manifolds.

\subsection{Extensions of flat F-manifolds and the open WDVV equations}\label{subsection:flat F-manifolds and open WDVV}

Consider a flat F-manifold structure, given by a vector potential $(F^1,\ldots,F^{N+1})$ on an open subset ${M\times U\in\mbC^{N+1}}$, where $M$ and $U$ are open subsets of $\mbC^N$ and $\mbC$, respectively. Suppose that the functions $F^1,\ldots,F^N$ don't depend on the variable $t^{N+1}$, varying in $U$. Then the functions $F^1,\ldots,F^N$ satisfy equations~\eqref{eq:axiom2 of flat F-man} and, thus, define a flat F-manifold structure on~$M$. In this case we call the flat F-manifold structure on $M\times U$ an {\it extension} of the flat F-manifold structure on $M$.

Consider the flat F-manifold associated to a Frobenius manifold, given by a potential $F(t^1,\ldots,t^N)\in\mcO(M)$ and a metric $\eta$, $F^\alpha=\eta^{\alpha\mu}\frac{\d F}{\d t^\mu}$, $1\le\alpha\le N$. It is easy to check that a function $F^o(t^1,\ldots,t^N,s)\in\mcO(M\times U)$ satisfies equations~\eqref{eq:open WDVV,1},~\eqref{eq:open WDVV,2} and~\eqref{eq:unit condition for Fo} if and only if the $(N+1)$-tuple $(F^1,\ldots,F^N,F^o)$ is a vector potential of a flat F-manifold. Here we identify $s=t^{N+1}$. This defines a correspondence between solutions of the open WDVV equations, satisfying property~\eqref{eq:unit condition for Fo}, and flat F-manifolds, extending the Frobenius manifold given. This observation belongs to Paolo Rossi. 


\section{Saito Frobenius manifolds}

In this section we recall the Dubrovin--Saito construction of a Frobenius manifold structure on the parameter space of a universal unfolding of a simple singularity. 

Let us first recall the list of polynomials defining simple singularities:
\begin{align*}
f_{A_N}(x,y)=&\frac{x^{N+1}}{N+1} + y^2,&& N\ge 1,\\
f_{D_N}(x,y)=&\frac{x^{N-1}}{N-1} + x y^2,&& N\ge 4,\\
f_{E_6}(x,y)=&x^4 + y^3,\\
f_{E_7}(x,y)=&x^3y + y^3,\\
f_{E_8}(x,y)=&x^5 + y^3.
\end{align*}
The associated {\it local algebra} is defined by 
$$
\mcA_W:=\mbC[x,y]\left/\left(\frac{\d f_W}{\d x},\frac{\d f_W}{\d y}\right)\right.,
$$
where $W$ is one of the types $A_N$, $D_N$ or $E_N$. Because $x=y=0$ is an isolated critical point of~$f_W$, the local algebra $\mcA_W$ turns out to be a finite-dimensional vector space. Denote $\dim\mcA_W=:N$. A {\it universal unfolding} of the singularity of $f_W$ is a function $\Lambda_W\colon \CC^{2}\times\CC^{N} \to \CC$ of the form
\begin{gather*}
\Lambda_W(x,y,v_1,\dots,v_N)=f_W+\sum_{i=1}^N v_i\phi_i,\quad\phi_i\in\mbC[x,y],
\end{gather*}    
such that the classes of polynomials $\phi_1,\ldots,\phi_N$ form a basis of the local algebra $\mcA_W$. Explicitly, universal unfoldings of the ADE singularities are given by 
\begin{align*}
  \Lambda_{A_N} &= \frac{x^{N+1}}{N+1} + y^2+ \sum_{k=1}^N v_k x^{k-1},\\
  \Lambda_{D_N} &= \frac{x^{N-1}}{N-1} + x y^2 + \sum _{k=1}^{N-1} v_k x^{k-1} + v_{N}y,\\
  \Lambda_{E_6} &= x^4 + y^3 + v_1 + v_2 x + v_3 y + v_4 x^2 + v_5 x y + v_6 x^2 y,\\
  \Lambda_{E_7} &= x^3y + y^3 + v_1 + v_2 x + v_3 y + v_4 x^2 + v_5 x y + v_6 x^3  + v_7 x^4,\\
  \Lambda_{E_8} &= x^5 + y^3 + v_1 + v_2 x + v_3 y + v_4 x^2 + v_5 x y + v_6 x^3  + v_7 x^2y + v_8 x^3y.
\end{align*}

Consider the quotient ring
$$
\hmcA_W:=\CC[x,y,v_1,\dots,v_N]/\left(\p_x\Lambda_W,\p_y\Lambda_W\right).
$$
As a $\mbC[v_1,\ldots,v_N]$-module, the space $\hmcA_W$ has dimension $N$ with a basis given by the classes $[\phi_1],\ldots,[\phi_N]\in\hmcA_W$ of the polynomials $\phi_1,\ldots,\phi_N$. Identifying the $\mbC[v_1,\ldots,v_N]$-modules $\mcT_{\mbC^N}^\alg(\mbC^N)$ and $\hmcA_W$ via the isomorphism $\Psi_W$ defined by
\begin{gather*}
\Psi_W\left(\frac{\p}{\p v_k}\right) := \left[\phi_k\right], \quad 1 \le k \le N,
\end{gather*}
by Remark~\ref{remark:algebraic construction of multiplication}, we endow the tangent spaces $T_p \mbC^N$ with a multiplication. The structure constants of it are polynomials in the coordinates $v_1,\ldots,v_N$.

A flat metric can be introduced as follows. It is easy to see that there exist unique positive rational numbers $q_x,q_y,q_1,\ldots,q_N$ such that
$$
q_x x \frac{\p\Lambda_W}{\p x} + q_y y \frac{\p\Lambda_W}{\p y} + \sum_{k=1}^N q_k v_k \frac{\p\Lambda_W}{\d v_k}=\Lambda_W.
$$
There is a unique index $1\le l\le N$ such that the number~$q_l$ is minimal. For the singularities~$A_N$ and~$E_N$ we have $l=N$ and in the $D_N$-case we have $l=N-1$. Denote by $(c_v)_{i,j}^k$ the structure constants of multiplication on $\mbC^N$ in the coordinates $v_1,\ldots,v_N$. Define a bilinear form $\eta_W$ on~$\mbC^N$ by
$$
\eta_W\left(\frac{\d}{\d v_i},\frac{\d}{\d v_j}\right):=(c_v)^l_{i,j}.
$$ 
This bilinear form together with the multiplication, introduced above, define a Frobenius manifold structure on $\mbC^N$, often called the Saito Frobenius manifold. It is semisimple and conformal with an Euler vector field $E_W$ given by
$$
E_W=\sum_{k=1}^N q_k v_k \frac{\p}{\d v_k}.
$$
The conformal dimension is $\delta=1-q_l$. 
\begin{remark}
The coordinates $v_1,\dots,v_N$ are not flat whenever $N \ge 3$. 
\end{remark}

There exist unique {\it global} flat coordinates $t^i(v_1,\ldots,v_N)$ on $\mbC^N$ such that
$$
t^i(v_1,\ldots,v_N)=v_i+O(v_*^2).
$$
They satisfy the quasi-homogeneity condition
\begin{gather}\label{eq:homogeneity of flat coordinates}
E_W(t^i(v_1,\ldots,v_N))=q_i t^i(v_1,\ldots,v_N),
\end{gather}
and, hence, the Euler vector field in the flat coordinates $t^i$ is given by
$$
E_W=\sum_{i=1}^N q_i t^i\frac{\d}{\d t^i}.
$$
The Frobenius manifold structure in the flat coordinates $t^i$ is described by a polynomial potential $F_W(t^1,\ldots,t^N)\in\mbC[t^1,\ldots,t^N]$, which we fix by requiring that it doesn't contain monomials of degree less than $3$. Then the polynomial $F_W$ satisfies the condition
\begin{gather}\label{eq:homogeneity of potential for Coxeter}
E_W(F_W)=(3-\delta)F_W.
\end{gather}

Explicit formulas for the flat coordinates of the Saito Frobenius manifolds of simple singularities are given in~\cite{NY98}. For the $A_N$-case the formula is
$$
t^\gamma(v_1,\ldots,v_N) = \hspace{-0.2cm}\sum_{\substack{\alpha_1,\ldots,\alpha_N\ge 0\\ \sum (N+2-i)\alpha_i = N+2-\gamma}}\hspace{-0.2cm}\frac{1}{N+1-\gamma}\prod_{k=0}^{\sum\alpha_i-1}\left(N+1-\gamma-k(N+1)\right)\frac{\prod v_i^{\alpha_i}}{\prod \alpha_i!}, \quad 1\le\gamma \le N.
$$
For the $D_N$-case the formula is
\begin{align}
&t^\gamma(v_1,\ldots,v_N) = \hspace{-0.2cm}\sum_{\substack{\alpha_1,\ldots,\alpha_{N-1}\ge 0\\ \sum (N-i)\alpha_i = N-\gamma}}\hspace{-0.1cm} \left(-\frac{1}{2}\right)^{\sum\alpha_i-1} \prod_{k=0}^{\sum\alpha_i-2} \left(2\gamma-1 + 2k(N-1) \right) \frac{\prod v_i^{\alpha_i}}{\prod \alpha_i!}, \quad 1\le\gamma \le N-1,\label{eq:Dn flat coordinates}\\
&t^N(v_1,\ldots,v_N)=v_N.\notag
\end{align}
Note that in this case the coordinates $t^1,\ldots,t^{N-1}$ depend only on $v_1,\ldots,v_{N-1}$.


\section{Extended $r$-spin theory and the open WDVV equations for the $A$-singularity}\label{section:extended r-spin}

Here we explain how the Saito Frobenius manifold of the $A$-singularity together with a certain solution of the open WDVV equations appear in the intersection theory on the moduli spaces of algebraic curves.  

Let $r=N+1$. For integers $0\le\alpha_1,\ldots,\alpha_n\le r-1$, using the geometry of algebraic curves with an $r$-spin structure, one can construct a cohomology class
$$
W^r_{0,n}(\alpha_1,\ldots,\alpha_n)\in H^{2d}(\oM_{0,n},\mbQ),\quad d=\frac{\sum\alpha_i-(r-2)}{r},
$$ 
called {\it Witten's class}, on the moduli space $\oM_{0,n}$ of stable curves of genus $0$ with $n$ marked points (see e.g.~\cite{PPZ16}). Here we assume that the class $W^r_{0,n}(\alpha_1,\ldots,\alpha_n)$ is equal to zero, if~$d$ is not an integer or $d<0$. The class $W^r_{0,n}(\alpha_1,\ldots,\alpha_n)$ vanishes, if one of the $\alpha_i$'s is equal to~$r-1$. Consider the generating series 
$$
F_{\rspin}(t^1,\ldots,t^{r-1}):=\sum_{n\ge 3}\frac{1}{n!}\sum_{0\le\alpha_1,\ldots,\alpha_n\le r-2}\left(\int_{\oM_{0,n}}W^r_{0,n}(\alpha_1,\ldots,\alpha_n)\right)\prod_{i=1}^n t^{\alpha_i+1}.
$$
The functions $F_{A_N}$ and $F_{\rspin}$ are related by~\cite{JKV01a}
$$
F_{A_N}(t^1,\ldots,t^N)=(-r)^{-3}F_{\rspin}((-r)t^1,\ldots,(-r)t^N).
$$
This is one of the simplest cases of mirror symmetry. Denote
$$
\<\tau_{\alpha_1}\ldots\tau_{\alpha_n}\>_{A_N}:=\left.\frac{\d^n F_{A_N}}{\d t^{\alpha_1}\ldots\d t^{\alpha_n}}\right|_{t^*=0},\quad 1\le\alpha_1,\ldots,\alpha_n\le N.
$$
We have (see e.g.~\cite[page 4]{LVX17})
\begin{gather}\label{eq:3-point and 4-point for AN}
\<\tau_{\alpha_1}\tau_{\alpha_2}\tau_{\alpha_3}\>_{A_N}=\delta_{\alpha_1+\alpha_2+\alpha_3,N+2},\qquad\<\tau_{\alpha_1}\tau_{\alpha_2}\tau_{\alpha_3}\tau_{\alpha_4}\>_{A_N}=-\min(\alpha_i-1,N+1-\alpha_i).
\end{gather}
These formulas will be used later.

In~\cite{JKV01b} the authors noticed that the construction of Witten's class $W^r_{0,n}(\alpha_1,\ldots,\alpha_n)$ can be extended to the case when $\alpha_1=-1$ and $0\le\alpha_2,\ldots,\alpha_n\le r-1$. In~\cite{BCT19} the authors considered the generating series 
$$
F^\ext_{\rspin}(t^1,\ldots,t^r):=\sum_{n\ge 2}\frac{1}{n!}\sum_{0\le\alpha_1,\ldots,\alpha_n\le r-1}\left(\int_{\oM_{0,n+1}}W^r_{0,n+1}(-1,\alpha_1,\ldots,\alpha_n)\right)\prod_{i=1}^n t^{\alpha_i+1}
$$
and proved that it satisfies the open WDVV equations, associated to the potential $F_{\rspin}$, together with property~\eqref{eq:unit condition for Fo}. Here one should identify $t^r=s$. It occurs that after a simple transformation the function $F^\ext_{\rspin}$ also gives the generating series of intersection numbers on the moduli space of $r$-spin disks, introduced in~\cite[Theorem~1.3]{BCT18}.

Introduce a function $F^o_{A_N}(t^1,\ldots,t^N,s)$ by
$$
F^o_{A_N}(t^1,\ldots,t^N,s):=(-r)^{-2}F^\ext_{\rspin}((-r)t^1,\ldots,(-r)t^N,(-r)s).
$$
Clearly, it satisfies the open WDVV equations, associated to $F_{A_N}$, together with condition~\eqref{eq:unit condition for Fo}. In~\cite[Proposition~5.1]{BCT18} the authors found an explicit formula for the coefficients of $F^\ext_{\rspin}$ that gives
\begin{gather*}
\<\tau_{\alpha_1}\ldots\tau_{\alpha_n}\sigma^k\>^o_{A_N}=
\begin{cases}
(n+k-2)!,&\text{if $\sum_{i=1}^n(N+2-\alpha_i)+k=N+2$},\\
0,&\text{otherwise},
\end{cases}
\end{gather*}
where we use the notation
$$
\<\tau_{\alpha_1}\ldots\tau_{\alpha_n}\sigma^k\>^o_{A_N}:=\left.\frac{\d^{n+k} F^o_{A_N}}{\d t^{\alpha_1}\ldots\d t^{\alpha_n}\d s^k}\right|_{t^*=s=0}.
$$
In~\cite{Bur18} the author proved that the coefficients of the function $F^o_{A_N}$ are related to the expression of the coordinates $v_k$ in the terms of the flat coordinates $t^1,\ldots,t^N$ on the Saito Frobenius manifold of the singularity $A_N$ by
\begin{gather}\label{eq:FoAN and the flat coordinates}
\frac{\d F^o_{A_N}}{\d s}=\frac{s^{N+1}}{N+1}+\sum_{k=1}^Ns^{k-1}v_k(t^1,\ldots,t^N).
\end{gather}


\section{Generalized Dubrovin--Saito construction for the singularities $A$ and $D$}\label{section: Mwext explicit}

In this section, for the singularites of types $A$ and $D$ we present a generalization of the Dubrovin--Saito construction that produces a flat F-manifold that extends the Saito Frobenius manifold and, therefore, gives a solution of the open WDVV equations. In the $A_N$-case this solution coincides with the function $F^o_{A_N}$. In both $A$- and $D$-cases, the coefficients of powers of the variable~$s$ in this solution coincide with the transition functions between two coordinate systems on the Saito Frobenius manifold. 

\subsection{$A_N$-case}

Consider the space $M^\ext_{A_N}:=\mbC^{N+1}$ with coordinates $v_1,\ldots,v_{N+1}$. Consider the quotient ring
$$
\hmcA^{\ext}_{A_N}:=\CC[x,y,w,v_1,\ldots,v_{N+1}] \Big/ \left(w - \p_x\Lambda_{A_N}, \p_y \Lambda_{A_N}, wx - v_{N+1} w \right).
$$
As a $\mbC[v_1,\ldots,v_{N+1}]$-module, the space $\hmcA^{\ext}_{A_N}$ is free of dimension $N+1$, and the elements $[1],[x],\ldots,[x^{N-1}],[w]$ form a basis. To show that any other element $[x^i y^j w^k]$ can be expressed in terms of them, first note that, obviously, $[y]=0$ and $[w x]=v_{N+1}[w]$. We also see that
\begin{gather}\label{eq:w2 for AN}
[w^2]=\left[w\left(x^N+\sum_{k=2}^{N}(k-1)v_k x^{k-2}\right)\right]=\left(v_{N+1}^N+\sum_{k=2}^N(k-1)v_k v_{N+1}^{k-2}\right)[w].
\end{gather}
Using the relation $[x^N]=[w]-\sum_{k=2}^N(k-1)v_k[x^{k-2}]$, we can express any element $[x^p]$ with $p\ge N$ in terms of the elements $[1],[x],\ldots,[x^{N-1}],[w]$.

Identifying the $\mbC[v_1,\ldots,v_{N+1}]$-modules $\mcT^\alg_{M^\ext_{A_N}}(M^\ext_{A_N})$ and $\hmcA^\ext_{A_N}$ via the isomorphism $\Psi^\ext_{A_N}$ defined by
\begin{gather*}
\Psi_W^\ext\left(\frac{\p}{\p v_k}\right) := [x^{k-1}], \quad 1 \le k \le N,\qquad \Psi_W^\ext \left(\frac{\p}{\p v_{N+1}} \right) := [w],
\end{gather*}
by Remark~\ref{remark:algebraic construction of multiplication}, we endow the tangent spaces $T_pM_{A_N}^\ext$ with a multiplication and, clearly, the structure constants of it are polynomials in the coordinates $v_1,\ldots,v_{N+1}$.

Consider the flat coordinates $t^\alpha=t^\alpha(v_1,\dots,v_N)$, $1 \le \alpha \le N$, the potential $F_{A_N}(t^1,\ldots,t^N)$ of the Frobenius manifold of the singularity $A_N$ and the function~$F^o_{A_N}$, described in Section~\ref{section:extended r-spin}.

\begin{theorem}~\\
1. The coordinates $t^1(v_1,\ldots,v_N),\ldots,t^N(v_1,\ldots,v_N)$ and $t^{N+1}:=v_{N+1}$ together with the multiplicative structure on $M^\ext_{A_N}$, constructed above, define a flat F-manifold structure on $M^\ext_{A_N}$.\\
2. The vector potential of this flat F-manifold is given by $\left(\eta_{A_N}^{1\alpha} \frac{\d F_{A_N}}{\d t^\alpha}, \ldots, \eta_{A_N}^{N\alpha} \frac{\d F_{A_N}}{\d t^\alpha}, F^o_{A_N}\right)$, where we identify $s=t^{N+1}$.
\end{theorem}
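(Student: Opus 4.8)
The plan is to prove both statements at once by identifying the multiplication $\circ$ built on $M^\ext_{A_N}$ with the multiplication of the flat F-manifold attached to the tuple $\left(\eta_{A_N}^{1\alpha}\frac{\d F_{A_N}}{\d t^\alpha},\ldots,\eta_{A_N}^{N\alpha}\frac{\d F_{A_N}}{\d t^\alpha},F^o_{A_N}\right)$. Write $F^\gamma:=\eta^{\gamma\mu}_{A_N}\frac{\d F_{A_N}}{\d t^\mu}$ for $1\le\gamma\le N$, set $F^{N+1}:=F^o_{A_N}$ and $s:=t^{N+1}=v_{N+1}$, and let $c^\gamma_{\alpha\beta}$ denote the structure constants of $\circ$ in the coordinates $(t^1,\ldots,t^N,s)$. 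Since $F^o_{A_N}$ satisfies the open WDVV equations associated to $F_{A_N}$ together with \eqref{eq:unit condition for Fo} (Section~\ref{section:extended r-spin}), the correspondence of Section~\ref{subsection:flat F-manifolds and open WDVV} already shows that $(F^1,\ldots,F^{N+1})$ is the vector potential of a flat F-manifold on $M^\ext_{A_N}$ with flat coordinates $(t^1,\ldots,t^N,s)$ and structure constants $\frac{\d^2 F^\gamma}{\d t^\alpha\d t^\beta}$. It therefore suffices to verify that $c^\gamma_{\alpha\beta}=\frac{\d^2 F^\gamma}{\d t^\alpha\d t^\beta}$ for all indices; statements~1 and~2 then follow simultaneously.

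Two structural observations would organize the comparison. First, because $t^1,\ldots,t^N$ depend only on $v_1,\ldots,v_N$ while $t^{N+1}=v_{N+1}$, the change of coordinates $v\mapsto t$ is block diagonal; and inspection of the flat-coordinate formula for $A_N$ shows that $v_1$ enters only $t^1$, and there only linearly with coefficient $1$, so $\frac{\d}{\d v_1}=\frac{\d}{\d t^1}$. As $[1]=\Psi^\ext_{A_N}\!\left(\frac{\d}{\d v_1}\right)$ is the unit of $\hmcA^\ext_{A_N}$, the field $\frac{\d}{\d t^1}$ is the unit of $\circ$, giving \eqref{eq:axiom1 of flat F-man}. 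Second, setting $w=0$ defines a surjective ring homomorphism $\pi\colon\hmcA^\ext_{A_N}\to\hmcA_{A_N}$ with $\pi([x^{k-1}])=[x^{k-1}]$ and kernel the $\mbC[v_1,\ldots,v_{N+1}]$-line spanned by $[w]$; hence for $\alpha,\beta\le N$ the product $\frac{\d}{\d v_\alpha}\circ\frac{\d}{\d v_\beta}$ equals the Saito product of the $A_N$ Frobenius manifold up to a multiple of $\frac{\d}{\d v_{N+1}}$.

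Organizing by the number of indices equal to $N+1$, most structure constants then match directly. For $\alpha,\beta,\gamma\le N$, the second observation together with block diagonality identifies $c^\gamma_{\alpha\beta}$ with the $A_N$ Saito constant in flat coordinates, namely $\frac{\d^3 F_{A_N}}{\d t^\alpha\d t^\beta\d t^\mu}\eta^{\mu\gamma}_{A_N}=\frac{\d^2 F^\gamma}{\d t^\alpha\d t^\beta}$. For the constants with output index $N+1$ and at least one input $N+1$, the relation $[wx]=v_{N+1}[w]$ and formula \eqref{eq:w2 for AN} give the $[w]$-components explicitly, and the bridge relation \eqref{eq:FoAN and the flat coordinates} turns them into $\frac{\d^2 F^o_{A_N}}{\d s\,\d t^\beta}=\sum_k s^{k-1}\frac{\d v_k}{\d t^\beta}$ and $\frac{\d^2 F^o_{A_N}}{\d s^2}=s^N+\sum_k(k-1)v_k s^{k-2}$, matching $c^{N+1}_{N+1,\beta}$ and $c^{N+1}_{N+1,N+1}$. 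For the constants with output index $\le N$ and some input $N+1$, both sides vanish: on the algebraic side $[w]\circ(\,\cdot\,)$ is always a multiple of $[w]$, while $F^\gamma$ with $\gamma\le N$ is independent of $s$.

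The only case needing an indirect argument is $c^{N+1}_{\alpha\beta}$ for $\alpha,\beta\le N$, the $[w]$-component of a product of two ``old'' directions, and this is where I expect the real content to sit. The idea is to exploit the associativity of $\circ$, which holds automatically because $\circ$ arises from the commutative associative algebra $\hmcA^\ext_{A_N}$ via Remark~\ref{remark:algebraic construction of multiplication}. Expanding $\left(\frac{\d}{\d t^\alpha}\circ\frac{\d}{\d t^\beta}\right)\circ\frac{\d}{\d s}=\frac{\d}{\d t^\alpha}\circ\left(\frac{\d}{\d t^\beta}\circ\frac{\d}{\d s}\right)$ and substituting the cases already settled yields
\[
c^{N+1}_{\alpha\beta}\,c^{N+1}_{N+1,N+1}=c^{N+1}_{N+1,\alpha}\,c^{N+1}_{N+1,\beta}-\sum_{\gamma\le N}c^\gamma_{\alpha\beta}\,c^{N+1}_{N+1,\gamma}.
\]
The right-hand side is exactly the right-hand side of the open WDVV equation \eqref{eq:open WDVV,2} for $F^o_{A_N}$, all of whose ingredients have already been matched; since $c^{N+1}_{N+1,N+1}=\frac{\d^2 F^o_{A_N}}{\d s^2}=s^N+\cdots$ is not identically zero, dividing by it forces $c^{N+1}_{\alpha\beta}=\frac{\d^2 F^o_{A_N}}{\d t^\alpha\d t^\beta}$. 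This completes the identification of the two multiplications and hence proves both parts of the theorem.
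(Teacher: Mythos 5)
Your proposal is correct and follows essentially the same route as the paper's own proof: reduce everything to matching the structure constants of $\hmcA^\ext_{A_N}$ with the second derivatives of the claimed vector potential, handle the components with upper index $\le N$ via the ideal spanned by $[w]$, match $c^{N+1}_{\gamma,N+1}$ using $[wx]=v_{N+1}[w]$, \eqref{eq:w2 for AN} and \eqref{eq:FoAN and the flat coordinates}, and recover the remaining constants $c^{N+1}_{\alpha\beta}$ from associativity of the algebra combined with the open WDVV equation \eqref{eq:open WDVV,2}. The only cosmetic difference is that you make explicit the cancellation of the nonzero factor $\frac{\d^2 F^o_{A_N}}{\d s^2}=s^N+\cdots$, which the paper leaves implicit.
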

\begin{proof}
We denote by $(c^\ext)^\alpha_{\beta\gamma}$ the structure constants of multiplication in the coordinates $t^1,\ldots,t^{N+1}$ and by $(c^\ext_v)^\alpha_{\beta\gamma}$ the structure constants of multiplication in the coordinates $v_1,\ldots,v_{N+1}$.

In order to prove the theorem, we have to check the following equations:
\begin{align}
(c^\ext)^\alpha_{\beta\gamma}=&\sum_{\mu=1}^N\eta_{A_N}^{\alpha\mu} \frac{\d^3 F_{A_N}}{\d t^\mu\d t^\beta\d t^\gamma},&& 1\le\alpha\le N,&& 1\le\beta,\gamma\le N+1,\label{eq:first property for AN}\\
(c^\ext)^{N+1}_{\alpha\beta}=&\frac{\d^2 F^o_{A_N}}{\d t^\alpha\d t^\beta},&& 1\le\alpha,\beta\le N+1.&&\label{eq:second property for AN}
\end{align}
Since the subspace $\mbC[v_1,\ldots,v_{N+1}]\<[w]\>$ is an ideal in the ring $\hmcA^{\ext}_{A_N}$ and the quotient by this ideal coincides with the ring $\hmcA_{A_N}$, we have
$$
(c^\ext_v)^a_{b,c}=
\begin{cases}
(c_v)^a_{b,c},&\text{if $1\le a,b,c\le N$},\\
0,&\text{if $1\le a\le N$ and one of the indices $b,c$ is equal to $N+1$}.
\end{cases}
$$
This implies equation~\eqref{eq:first property for AN} and it remains to prove~\eqref{eq:second property for AN}. 

Suppose $1\le\alpha\le N$ and $\beta=N+1$. Since $[xw]=v_{N+1}[w]$, we have $(c^\ext_v)^{N+1}_{k,N+1}=v_{N+1}^{k-1}$ for $1\le k\le N$, and, therefore,
$$
(c^\ext)^{N+1}_{\alpha,N+1}=\sum_{k=1}^N\frac{\d v_k}{\d t^\alpha}(c^\ext_v)^{N+1}_{k,N+1}=\sum_{k=1}^N\frac{\d v_k}{\d t^\alpha}v_{N+1}^{k-1}\stackrel{\text{eq.\eqref{eq:FoAN and the flat coordinates}}}{=}\frac{\d^2 F^o_{A_N}}{\d t^\alpha\d t^{N+1}}.
$$

Suppose $\alpha=\beta=N+1$. Then we compute
$$
(c^\ext)^{N+1}_{N+1,N+1}=(c^\ext_v)^{N+1}_{N+1,N+1}\stackrel{\text{eq.\eqref{eq:w2 for AN}}}{=}v_{N+1}^N+\sum_{k=2}^N(k-1)v_k v_{N+1}^{k-2}\stackrel{\text{eq.\eqref{eq:FoAN and the flat coordinates}}}{=}\frac{\d^2 F^o_{A_N}}{\d (t^{N+1})^2}.
$$

Finally, if $1\le\alpha,\beta\le N$, then from the associativity of the algebra $\hmcA^{\ext}_{A_N}$ we get
$$
(c^{\ext})^{N+1}_{\alpha\beta}(c^\ext)^{N+1}_{N+1,N+1}=(c^\ext)^{N+1}_{\alpha,N+1}(c^\ext)^{N+1}_{\beta,N+1}-\sum_{\mu=1}^N c^\mu_{\alpha\beta}(c^\ext)^{N+1}_{\mu,N+1}.
$$
Since the function $F^o_{A_N}$ satisfies~\eqref{eq:open WDVV,2} and, as we have just proved, $(c^\ext)^{N+1}_{\gamma,N+1}=\frac{\d^2 F^o_{A_N}}{\d t^\gamma\d t^{N+1}}$ for $1\le\gamma\le N+1$, we obtain $(c^\ext)^{N+1}_{\alpha\beta}=\frac{\d^2 F^o_{A_N}}{\d t^\alpha\d t^\beta}$.
\end{proof}

\subsection{$D_N$-case}

Consider the space $M^\ext_{D_N}:=\mbC^N\times\mbC^*$ with coordinates $v_1,\ldots,v_{N+1}$. Consider the quotient ring
\begin{gather*}
\hmcA^{\ext}_{D_N} := \CC[x,y,w,v_1,\dots,v_N,v_{N+1},v_{N+1}^{-1}]\Big/ \left(w - v_{N+1} \p_x\Lambda_{D_N},\p_y \Lambda_{D_N},2wx - v_{N+1}^2w\right).
\end{gather*}
As a $\CC[v_1,\dots,v_N,v_{N+1},v_{N+1}^{-1}]$-module, the space $\hmcA^{\ext}_{D_N}$ is free of dimension $N+1$ with a basis $[1],[x],\ldots,[x^{N-2}],[y],[w]$. To show that any other element $[x^i y^j w^k]$ can be expressed in terms of them, first note that 
\begin{gather}\label{eq:xy for DN}
[xy]=-\frac{v_N}{2}[1],\qquad[w x]=\frac{v_{N+1}^{2}}{2}[w],\qquad [wy]=-\frac{v_N}{v_{N+1}^2}[w],
\end{gather}
where the last equation follows from the first two. Similarly to the $A_N$-case, we have
\begin{gather}\label{eq:w2 for DN}
[w^2]=\left(\frac{v_{N+1}^{2N-3}}{2^{N-2}}+\frac{v_N^2}{v_{N+1}^3}+\sum_{k=2}^{N-1}(k-1)v_k \frac{v_{N+1}^{2k-3}}{2^{k-2}}\right)[w].
\end{gather}
Using that $[w-v_{N+1} \p_x\Lambda_{D_N}]=0$, we obtain
\begin{gather}\label{eq:y2 for DN}
[y^2]=\frac{1}{v_{N+1}}[w]-[x^{N-2}]-\sum_{k=2}^{N-1}(k-1)v_k[x^{k-2}].
\end{gather}
Multiplying this equation by $[x]$, we get the relation
\begin{gather}\label{eq:xN-1 for DN}
[x^{N-1}]=\frac{v_{N+1}}{2}[w]+\frac{v_N}{2}[y]-\sum_{k=1}^{N-1}(k-1)v_k[x^{k-1}],
\end{gather}
that allows to express any element $[x^p]$ with $p\ge N-1$ in terms of the elements $[1],[x],\ldots,[x^{N-2}]$, $[y]$, $[w]$.

Identifying the $\CC[v_1,\dots,v_N,v_{N+1},v_{N+1}^{-1}]$-modules $\mcT^\alg_{M^\ext_{D_N}}(M^\ext_{D_N})$ and $\hmcA^\ext_{D_N}$ via the isomorphism~$\Psi^\ext_{D_N}$ defined by
\begin{gather*}
\Psi_W^\ext\left(\frac{\p}{\p v_k}\right) := [x^{k-1}], \quad 1 \le k \le N-1,\qquad \Psi_W^\ext \left(\frac{\p}{\p v_N} \right) := [y],\qquad \Psi_W^\ext \left(\frac{\p}{\p v_{N+1}} \right) := [w],
\end{gather*}
we endow the tangent spaces $T_pM_{D_N}^\ext$ with a multiplication and, clearly, the structure constants of it belong to the ring $\mbC[v_1,\ldots,v_N,v_{N+1},v_{N+1}^{-1}]$.

Consider the flat coordinates $t^\alpha=t^\alpha(v_1,\dots,v_N)$, $1 \le \alpha \le N$, and the potential $F_{D_N}(t^1,\ldots,t^N)$ of the Frobenius manifold of the singularity $D_N$. Let $t^{N+1}:=v_{N+1}$ and define a function $F^o_{D_N}(t^1,\ldots,t^{N+1})$ by
\begin{gather*}
F^o_{D_N}:=\left.\left(\sum_{k=1}^{N-1} \frac{v_k v_{N+1}^{2 k-1}}{2^{k-1}(2 k-1)} + \frac{v_{N+1}^{2 N-1}}{2^{N-2}(2 N-1) (2N-2)} + \frac{v_N^2}{2 v_{N+1}}\right)\right|_{v_i=v_i(t^*)}.
\end{gather*}

\begin{theorem}~\\
1. The coordinates $t^1(v_1,\ldots,v_N),\ldots,t^N(v_1,\ldots,v_N)$ and $t^{N+1}=v_{N+1}$ together with the multiplicative structure on $M^\ext_{D_N}$, constructed above, define a flat F-manifold structure on $M^\ext_{D_N}$.\\
2. The vector potential of this flat F-manifold is given by $\left(\eta_{D_N}^{1\alpha} \frac{\d F_{D_N}}{\d t^\alpha}, \ldots, \eta_{D_N}^{N\alpha} \frac{\d F_{D_N}}{\d t^\alpha}, F^o_{D_N}\right)$.
\end{theorem}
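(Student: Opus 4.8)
The plan is to mirror the proof of the $A_N$-case almost verbatim, reducing the statement to two families of identities for the structure constants $(c^\ext)^\alpha_{\beta\gamma}$ of the multiplication in the coordinates $t^1,\dots,t^{N+1}$: that $(c^\ext)^\alpha_{\beta\gamma}=\sum_{\mu=1}^N\eta_{D_N}^{\alpha\mu}\frac{\d^3 F_{D_N}}{\d t^\mu\d t^\beta\d t^\gamma}$ for $1\le\alpha\le N$, and that $(c^\ext)^{N+1}_{\alpha\beta}=\frac{\d^2 F^o_{D_N}}{\d t^\alpha\d t^\beta}$ for $1\le\alpha,\beta\le N+1$. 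The first family is handled exactly as for $A_N$: from \eqref{eq:xy for DN} and \eqref{eq:w2 for DN} one sees that $\CC[v_1,\dots,v_N,v_{N+1},v_{N+1}^{-1}]\<[w]\>$ is an ideal of $\hmcA^\ext_{D_N}$ whose quotient coincides with $\hmcA_{D_N}$ (setting $[w]=0$ turns $w=v_{N+1}\p_x\Lambda_{D_N}$ into $\p_x\Lambda_{D_N}=0$, since $v_{N+1}$ is invertible). Hence for $1\le\alpha\le N$ the structure constants agree with those of the Saito Frobenius manifold, which gives the first $N$ components of the vector potential.

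For the second family I would first settle the cases in which one index equals $N+1$, by direct computation as in the $A_N$-proof. Since $t^{N+1}=v_{N+1}$ and $v_1,\dots,v_N$ are independent of $v_{N+1}$, the operator $\frac{\d}{\d t^{N+1}}$ is simply $\p_{v_{N+1}}$. The relations $[wx^{k-1}]=(v_{N+1}^2/2)^{k-1}[w]$ and $[wy]=-\frac{v_N}{v_{N+1}^2}[w]$ from \eqref{eq:xy for DN}, together with \eqref{eq:w2 for DN}, give closed expressions for $(c^\ext_v)^{N+1}_{k,N+1}$ and $(c^\ext_v)^{N+1}_{N+1,N+1}$. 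Differentiating the explicit function $F^o_{D_N}$ in $v_{N+1}$ and passing to the coordinates $t^\alpha$ by the chain rule then yields $(c^\ext)^{N+1}_{\alpha,N+1}=\frac{\d^2 F^o_{D_N}}{\d t^\alpha\d t^{N+1}}$ for $1\le\alpha\le N$ and $(c^\ext)^{N+1}_{N+1,N+1}=\frac{\d^2 F^o_{D_N}}{\d(t^{N+1})^2}$; these are routine.

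The remaining case $1\le\alpha,\beta\le N$ is the main obstacle, and here the argument must genuinely depart from the $A_N$-one. For $A_N$ the identity $(c^\ext)^{N+1}_{\alpha\beta}=\frac{\d^2F^o_{A_N}}{\d t^\alpha\d t^\beta}$ was deduced from the associativity of $\hmcA^\ext_{A_N}$ together with the \emph{a priori} knowledge that $F^o_{A_N}$ solves \eqref{eq:open WDVV,2}; no such input is available now, because $F^o_{D_N}$ is defined here by an explicit formula, so the content of this case \emph{is} equation \eqref{eq:open WDVV,2} for $F^o_{D_N}$ (with $s=t^{N+1}$), which I must prove from scratch. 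One clean structural fact is that multiplication by $\frac{\d}{\d t^{N+1}}=[w]$ has rank one: since $[w]$ spans a one-dimensional ideal, $\frac{\d}{\d t^\gamma}\circ\frac{\d}{\d t^{N+1}}=(c^\ext)^{N+1}_{\gamma,N+1}\frac{\d}{\d t^{N+1}}$ for every $\gamma$, so that $(c^\ext)^\delta_{\gamma,N+1}=0$ for $\delta\le N$. Feeding the already-computed quantities into the associativity relation, the desired identity reduces to the scalar equation $\frac{\d P}{\d t^\alpha}\frac{\d P}{\d t^\beta}-\sum_\mu c^\mu_{\alpha\beta}\frac{\d P}{\d t^\mu}=\frac{\d P}{\d t^{N+1}}\frac{\d^2 F^o_{D_N}}{\d t^\alpha\d t^\beta}$, where $P:=\frac{\d F^o_{D_N}}{\d t^{N+1}}$ is the explicit function found above and $c^\mu_{\alpha\beta}=\eta_{D_N}^{\mu\nu}\frac{\d^3 F_{D_N}}{\d t^\nu\d t^\alpha\d t^\beta}$.

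To prove this I would exploit a parity observation together with conformality. Using \eqref{eq:y2 for DN} and \eqref{eq:xN-1 for DN} one checks that the coefficient of $[w]$ in any product of the basis vectors $[1],[x],\dots,[x^{N-2}],[y]$ is a Laurent polynomial in $v_{N+1}$ containing only \emph{odd} powers, so $(c^\ext)^{N+1}_{\alpha\beta}$ has no $v_{N+1}^0$-term; the same holds for $\frac{\d^2 F^o_{D_N}}{\d t^\alpha\d t^\beta}$, since every monomial of $F^o_{D_N}$ carries an odd power of $v_{N+1}$. Therefore the difference $(c^\ext)^{N+1}_{\alpha\beta}-\frac{\d^2 F^o_{D_N}}{\d t^\alpha\d t^\beta}$ contains only odd powers of $v_{N+1}$, and it suffices to show that it is independent of $v_{N+1}=t^{N+1}$, i.e. to establish the integrability relation $\frac{\d}{\d t^{N+1}}(c^\ext)^{N+1}_{\alpha\beta}=\frac{\d}{\d t^\beta}(c^\ext)^{N+1}_{\alpha,N+1}$, for then an odd, $v_{N+1}$-independent function must vanish. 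I expect this integrability relation — equivalently, controlling the second derivatives $\frac{\d^2 v_k}{\d t^\alpha\d t^\beta}$ in the flat coordinates \eqref{eq:Dn flat coordinates} (recalling that $t^1,\dots,t^{N-1}$ depend only on $v_1,\dots,v_{N-1}$ and $t^N=v_N$) — to be where essentially all the work lies; the quasi-homogeneity of all the quantities involved, of weight $\tfrac{2N-1}{2(N-1)}$, should reduce it to a finite check.
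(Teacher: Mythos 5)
Your preliminary reductions are sound and, up to the point where you stop, largely retrace the paper: the ideal $\CC[v_1,\dots,v_N,v_{N+1}^{\pm1}]\<[w]\>$ gives \eqref{eq:first property for DN}, and the cases of \eqref{eq:second property for DN} with an index in $\{N,N+1\}$ are the same routine computations done there. Your treatment of the remaining case is packaged differently: you use associativity of $\hmcA^\ext_{D_N}$ and the rank-one ideal generated by $[w]$ to reduce to a scalar PDE (correctly observing that, unlike in the $A_N$-case, equation \eqref{eq:open WDVV,2} is not available as external input and must be proven), and then a parity-in-$v_{N+1}$ argument to reduce further to the integrability relation $\frac{\d}{\d t^{N+1}}(c^\ext)^{N+1}_{\alpha\beta}=\frac{\d}{\d t^\beta}(c^\ext)^{N+1}_{\alpha,N+1}$. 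The paper instead reduces directly to the identity \eqref{eq:equation for cab for DN}. But these reformulations are equivalent, not progress: since $(c^\ext)^{N+1}_{\alpha,N+1}=\sum_k\frac{\d v_k}{\d t^\alpha}\frac{v_{N+1}^{2k-2}}{2^{k-1}}$ for $\alpha\le N-1$, your integrability relation is literally the $\d/\d v_{N+1}$-derivative of \eqref{eq:equation for cab for DN}, and by your own parity observation the two statements carry the same content. So after all reductions the entire mathematical substance of the theorem is still ahead of you.

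That substance is missing from the proposal, and this is a genuine gap. Establishing the integrability relation requires (i) a closed formula for the structure constants $(c^\ext_v)^{N+1}_{a,b}$, $1\le a,b\le N-1$ --- this is Lemma~\ref{lemma: Dn xp product} of the paper, whose proof needs the auxiliary polynomials $\omega_k$ and their recursion \eqref{eq:recursion for omegak} --- and (ii) an identity relating these to the second derivatives $\frac{\d^2 t^\gamma}{\d v_a\d v_b}$ of the coordinate change, namely \eqref{eq:main identity for DN}, which the paper verifies by expanding the explicit Noumi--Yamada formula \eqref{eq:Dn flat coordinates} coefficient by coefficient. Your closing claim, that quasi-homogeneity of weight $\frac{2N-1}{2(N-1)}$ ``should reduce it to a finite check,'' is where the proposal fails as a proof: quasi-homogeneity only pins down the degrees of the monomials that can occur (in the paper it is used just to dispose of the trivial range $a+b\le N$), not their coefficients, and the statement to be proven is an infinite family of polynomial identities, one for every $N\ge 4$ and every index pair. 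For a fixed $N$ it is indeed a finite computation, but no finite check covers all $N$; a uniform argument is needed, and it genuinely uses the precise coefficients $\left(-\frac{1}{2}\right)^{\sum\alpha_i}\prod_{k}\left(2\gamma-1+2k(N-1)\right)$ from \eqref{eq:Dn flat coordinates}, as in the derivation of \eqref{eq:final identity for DN}. Without this input (or some equivalent characterization of the flat coordinates), the key identity cannot be established.
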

\begin{proof}
We denote by $(c^\ext)^\alpha_{\beta\gamma}$ the structure constants of multiplication in the coordinates $t^1,\ldots,t^{N+1}$ and by $(c^\ext_v)^\alpha_{\beta\gamma}$ the structure constants of multiplication in the coordinates $v_1,\ldots,v_{N+1}$.

In order to prove the theorem, we have to check the following equations:
\begin{align}
(c^\ext)^\alpha_{\beta\gamma}=&\sum_{\mu=1}^N\eta_{D_N}^{\alpha\mu} \frac{\d^3 F_{D_N}}{\d t^\mu\d t^\beta\d t^\gamma},&& 1\le\alpha\le N,&& 1\le\beta,\gamma\le N+1,\label{eq:first property for DN}\\
(c^\ext)^{N+1}_{\alpha\beta}=&\frac{\d^2 F^o_{D_N}}{\d t^\alpha\d t^\beta},&&1\le\alpha,\beta\le N+1.&&\label{eq:second property for DN}
\end{align}
Since the subspace $\mbC[v_1,\ldots,v_N,v_{N+1},v_{N+1}^{-1}]\<[w]\>$ is an ideal in the ring $\hmcA^{\ext}_{D_N}$ and the quotient by this ideal coincides with the ring $\hmcA_{D_N}$, we have
$$
(c^\ext_v)^a_{b,c}=
\begin{cases}
(c_v)^a_{b,c},&\text{if $1\le a,b,c\le N$},\\
0,&\text{if $1\le a\le N$ and one of the indices $b,c$ is equal to $N+1$}.
\end{cases}
$$
This implies equation~\eqref{eq:first property for DN} and it remains to prove~\eqref{eq:second property for DN}. 

We have two substantially different cases: the case $\alpha\in\{N,N+1\}$ or $\beta\in\{N,N+1\}$ and the case $\alpha,\beta\in\{1,2,\dots,N-1\}$.

\bigskip

\noindent{\it Case $\alpha\in\{N,N+1\}$ or $\beta\in\{N,N+1\}$}. If $\alpha,\beta\in\{N,N+1\}$, then 
$(c^\ext)^{N+1}_{\alpha,\beta}=(c^\ext_v)^{N+1}_{\alpha,\beta}$ and $\frac{\d^2 F^o_{D_N}}{\d t^\alpha\d t^\beta}=\frac{\d^2 F^o_{D_N}}{\d v_\alpha\d v_\beta}$. The equation $(c^\ext_v)^{N+1}_{\alpha,\beta}=\frac{\d^2 F^o_{D_N}}{\d v_\alpha\d v_\beta}$ immediately follows from formulas~\eqref{eq:xy for DN},~\eqref{eq:w2 for DN} and~\eqref{eq:y2 for DN}.

If $1\le\alpha\le N-1$ and $\beta=N+1$, then
\begin{gather*}
(c^\ext)^{N+1}_{\alpha,N+1}=\sum_{k=1}^{N-1}\frac{\d v_k}{\d t^\alpha}(c^\ext_v)^{N+1}_{k,N+1}\stackrel{\text{eq.\eqref{eq:xy for DN}}}{=}\sum_{k=1}^{N-1}\frac{\d v_k}{\d t^\alpha}\frac{v_{N+1}^{2k-2}}{2^{k-1}}=\frac{\d}{\d t^\alpha}\frac{\d F^o_{D_N}}{\d v_{N+1}}=\frac{\d^2 F^o_{D_N}}{\d t^\alpha\d t^{N+1}}.
\end{gather*}

If $1\le\alpha\le N-1$ and $\beta=N$, then
\begin{gather*}
(c^\ext)^{N+1}_{\alpha,N}=\sum_{k=1}^{N-1}\frac{\d v_k}{\d t^\alpha}(c^\ext_v)^{N+1}_{k,N}\stackrel{\text{eq.\eqref{eq:xy for DN}}}{=}0=\frac{\d}{\d t^\alpha}\frac{\d F^o_{D_N}}{\d v_N}=\frac{\d^2 F^o_{D_N}}{\d t^\alpha\d t^N}.
\end{gather*}

\bigskip

\noindent{\it Case $\alpha,\beta\in\{1,2,\ldots,N-1\}$}. We have to check that
\begin{gather*}
(c^\ext)^{N+1}_{\alpha\beta}=\frac{\d^2 F^o_{D_N}}{\d t^\alpha\d t^\beta} \Leftrightarrow \sum_{1\le a,b\le N-1}\frac{\d v_a}{\d t^\alpha}\frac{\d v_b}{\d t^\beta}(c^\ext_v)^{N+1}_{a,b}=\sum_{k=1}^{N-1}\frac{\d^2 v_k}{\d t^\alpha\d t^\beta}\frac{v_{N+1}^{2k-1}}{2^{k-1}(2k-1)},
\end{gather*}
that is equivalent to the equation
\begin{gather}\label{eq:equation for cab for DN}
(c^\ext_v)^{N+1}_{a,b}=-\sum_{k=1}^{N-1}\frac{\d v_k}{\d t^\gamma}\frac{\d^2 t^\gamma}{\d v_a\d v_b}\frac{v_{N+1}^{2k-1}}{2^{k-1}(2k-1)},\quad 1\le a,b\le N-1.
\end{gather}

Let us compute the structure constants $(c^\ext_v)^{N+1}_{a,b}$. Introduce polynomials $\omega_k\in\mbQ[v_1,\ldots,v_{N-1}]$, $k \ge 0$, by
\begin{gather*}
\omega_k := \sum_{\substack{\alpha_1,\dots,\alpha_{N-1}\ge 0\\ \sum (N-i)\alpha_i = k}} s_1^{\alpha_1}\cdots s_{N-1}^{\alpha_{N-1}} \frac{(\sum \alpha_i )!}{\prod \alpha_i!},\quad\text{where $s_i:=(1-i)v_i$ for $1\le i\le N-1$.}
\end{gather*}
The first few functions $\omega_k$ are
\[
  \omega_0 = 1, \quad \omega_1 = s_{N-1}, \quad \omega_2 = s_{N-2} + s_{N-1}^2, \quad \omega_3 = s_{N-3} + 2 s_{N-2}s_{N-1} + s_{N-1}^3.
\]
The functions $\omega_k$ satisfy the recursion relation
\begin{gather}\label{eq:recursion for omegak}
\omega_{k+1}=\sum_{i=1}^{N-1}s_{N-i}\omega_{k+1-i},\quad k\ge 0,
\end{gather}
where we adopt the convention $\omega_j:=0$ for $j<0$.

\begin{lemma}\label{lemma: Dn xp product}
For $1\le a,b\le N-1$ we have
\begin{gather}\label{eq:cabN+1 in Dn}
c_{a,b}^{N+1} = \sum_{k=0}^{a+b-N-1} \omega_k \frac{v_{N+1}^{2(a+b)-2N-1-2k}}{2^{a+b-N-k}}.
\end{gather}
\end{lemma}
\begin{proof}
Equation~\eqref{eq:cabN+1 in Dn} is equivalent to the following formula:
\begin{equation}\label{eq:Dn cab classes}
\Coef_{[w]}[x^{p-2}] = \sum_{k=0}^{p-N-1} \omega_k \frac{v_{N+1}^{2p-2N-1-2k}}{2^{p-N-k}},\quad 2\le p\le 2N-2,
\end{equation}
where $\Coef_{[w]}[x^{p-2}]$ denotes the coefficient of $[w]$ in the expression for $[x^{p-2}]$ in terms of the basis elements $[1],[x],\ldots,[x^{N-2}],[y],[w]$. For $p\le N$ formula~\eqref{eq:Dn cab classes} is obvious because both sides of it are equal to zero. 

Suppose $p\ge N+1$. Multiplying both sides of equation~\eqref{eq:xN-1 for DN} by $[x^{p-N-1}]$, we get the relation
$$
\Coef_{[w]}[x^{p-2}]=\frac{v_{N+1}^{2p-2N-1}}{2^{p-N}}+\sum_{k=1}^{N-1}s_k\Coef_{[w]}[x^{p+k-N-2}],
$$
that allows to compute the coefficients $\Coef_{[w]}[x^{p-2}]$ recursively. Then, using also relation~\eqref{eq:recursion for omegak}, formula~\eqref{eq:Dn cab classes} can be easily proved by induction.
\end{proof}

Using the lemma, we see that equation~\eqref{eq:equation for cab for DN} can be equivalently written as
\begin{align*}
&\sum_{k=0}^{a+b-N-1} \omega_k \frac{v_{N+1}^{2(a+b)-2N-1-2k}}{2^{a+b-N-k}}=-\sum_{k=1}^{N-1}\frac{\d v_k}{\d t^\gamma}\frac{\d^2 t^\gamma}{\d v_a\d v_b}\frac{v_{N+1}^{2k-1}}{2^{k-1}(2k-1)}\Leftrightarrow\\
\Leftrightarrow&\sum_{k=1}^{N-1} \omega_{a+b-N-k} \frac{v_{N+1}^{2k-1}}{2^k}=-\sum_{k=1}^{N-1}\frac{\d v_k}{\d t^\gamma}\frac{\d^2 t^\gamma}{\d v_a\d v_b}\frac{v_{N+1}^{2k-1}}{2^{k-1}(2k-1)}.
\end{align*}
So we have to prove that
\begin{gather}\label{eq:identity for DN}
\frac{\d v_k}{\d t^\gamma}\frac{\d^2 t^\gamma}{\d v_a\d v_b}=-\frac{2k-1}{2}\omega_{a+b-N-k}\Leftrightarrow \frac{\d^2 t^\gamma}{\d v_a\d v_b}=-\sum_{k=1}^{N-1}\frac{2k-1}{2}\omega_{a+b-N-k}\frac{\d t^\gamma}{\d v_k}.
\end{gather}
Recall that $t^\gamma(v_1,\ldots,v_{N-1})$ is a quasi-homogeneous polynomial of degree $N-\gamma$, if we put $\deg v_a=N-a$. This implies that both sides of the last equation in~\eqref{eq:identity for DN} are zero if $a+b\le N$. Let us assume now that $a+b\ge N+1$. The last equation in~\eqref{eq:identity for DN} is equivalent to 
$$
\frac{\d^2 t^\gamma}{\d v_a\d v_b}-\sum_{i=1}^{a-1}s_{N-i}\frac{\d^2 t^\gamma}{\d v_{a-i}\d v_b}=-\sum_{k=1}^{N-1}\frac{2k-1}{2}\frac{\d t^\gamma}{\d v_k}\left(\omega_{a+b-N-k}-\sum_{i=1}^{a-1}s_{N-i}\omega_{a-i+b-N-k}\right).
$$
Note that for $i\ge a$ we have $a-i+b-N-k<0$ and, therefore, by~\eqref{eq:recursion for omegak}, the expression in the brackets is equal to zero unless $k=a+b-N$. So we come to the following equivalent identity:
\begin{gather}\label{eq:main identity for DN}
\frac{\d^2 t^\gamma}{\d v_a\d v_b}-\sum_{i=1}^{a-1}s_{N-i}\frac{\d^2 t^\gamma}{\d v_{a-i}\d v_b}=-\frac{2(a+b-N)-1}{2}\frac{\d t^\gamma}{\d v_{a+b-N}},\quad 
\begin{array}{@{}c@{}} 1\le a,b,\gamma\le N-1, \\  a+b\ge N+1. \end{array}
\end{gather}

Note that both sides of \eqref{eq:main identity for DN} are quasi-homogeneous polynomials of degree $a+b-\gamma-N$. Differentiating both sides by $\frac{\d^{\sum\alpha_i}}{\d v_1^{\alpha_1}\ldots \d v_{N-1}^{\alpha_{N-1}}}$, putting $v_j=0$ and using formula~\eqref{eq:Dn flat coordinates}, wee see that equation~\eqref{eq:main identity for DN} is equivalent to the following family of identities:
\begin{align}
&-\frac{1}{2}A\left(2\gamma-1+2(N-1)\sum\alpha_i\right)+A\sum_{i=1}^{a-1}(N-i-1)\alpha_{N-i}=-A\frac{2(a+b-N)-1}{2}\Leftrightarrow\notag\\
\Leftrightarrow & A\left(-\gamma-(N-1)\sum\alpha_i+\sum_{i=1}^{a-1}(N-i-1)\alpha_{N-i}+a+b-N\right)=0,\label{eq:final identity for DN}
\end{align} 
that should be true for any tuple $\alpha_1,\ldots,\alpha_{N-1}\ge 0$ such that 
\begin{gather}\label{eq:degree condition for alpha}
\sum_{i=1}^{N-1} (N-i)\alpha_i=a+b-\gamma-N,
\end{gather}
and where $A=\left(-\frac{1}{2}\right)^{\sum\alpha_i} \prod_{k=0}^{\sum\alpha_i-1} \left(2\gamma-1 + 2k(N-1)\right)$. Condition~\eqref{eq:degree condition for alpha} implies that $\alpha_i=0$, if $i\le N-a$. Therefore, the summation $\sum_{i=1}^{a-1}$ in~\eqref{eq:final identity for DN} can be replaced by the summation $\sum_{i=1}^{N-1}$ and, using~\eqref{eq:degree condition for alpha}, we immediately see that the expression in the brackets in~\eqref{eq:final identity for DN} vanishes. This completes the proof of the theorem.
\end{proof}

Taking into account the discussion about the relation between solutions of the open WDVV equations and flat F-manifolds from Section~\ref{subsection:flat F-manifolds and open WDVV}, we get the following result.

\begin{corollary}~\\
1. The function $F^o_{D_N}$ satisfies the open WDVV equations together with condition~\eqref{eq:unit condition for Fo} and the quasi-homogeneity property
$$
\sum_{\alpha=1}^N q_\alpha t^\alpha\frac{\d F^o_{D_N}}{\d t^\alpha}+\frac{1-\delta}{2}s\frac{\d F^o_{D_N}}{\d s}=\frac{3-\delta}{2}F^o_{D_N}.
$$
2. We have
\begin{gather*}
v_k(t^1,\ldots,t^N) = 
\begin{cases}
2^{k-1}(2k-1)\Coef_{s^{2k-1}}F^o_{D_N},&\text{if $1\le k\le N-1$},\\
\sqrt{2\Coef_{s^{-1}}F^o_{D_N}},&\text{if $k=N$}.
\end{cases}
\end{gather*}
\end{corollary}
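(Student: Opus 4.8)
The plan is to deduce both parts of the corollary directly from the theorem just proved, together with the Rossi correspondence recalled in Section~\ref{subsection:flat F-manifolds and open WDVV}; no new geometric input is required, so the work is essentially bookkeeping. For the first assertion I would argue as follows. The theorem identifies $\left(\eta_{D_N}^{1\alpha}\frac{\d F_{D_N}}{\d t^\alpha},\ldots,\eta_{D_N}^{N\alpha}\frac{\d F_{D_N}}{\d t^\alpha},F^o_{D_N}\right)$ as the vector potential of a flat F-manifold whose first $N$ components form the Saito Frobenius manifold of $D_N$. By the equivalence stated in Section~\ref{subsection:flat F-manifolds and open WDVV}, this is precisely the statement that $F^o_{D_N}$ solves the open WDVV equations~\eqref{eq:open WDVV,1},~\eqref{eq:open WDVV,2} and satisfies the unit condition~\eqref{eq:unit condition for Fo}. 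So this part is immediate.

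It remains to establish the quasi-homogeneity. Here I would work in the $v$-coordinates, assigning to $v_k$ the Saito weight $q_k$ for $1\le k\le N$ and to $v_{N+1}=s$ the weight $\frac{1-\delta}{2}$. The key numerical fact is that $\frac{1-\delta}{2}=\frac{q_x}{2}=\frac{1}{2(N-1)}$, which is exactly the weight forced on $v_{N+1}$ by requiring the defining relation $2wx=v_{N+1}^2 w$ of $\hmcA^\ext_{D_N}$ to be quasi-homogeneous; this is what makes the extended multiplication quasi-homogeneous and hence the flat F-manifold conformal. With these weights, a direct inspection of the explicit formula for $F^o_{D_N}$ shows that each of its three groups of monomials, namely $v_k v_{N+1}^{2k-1}$, $v_{N+1}^{2N-1}$ and $v_N^2 v_{N+1}^{-1}$, has weighted degree exactly $\frac{3-\delta}{2}=\frac{2N-1}{2(N-1)}$. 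Combined with the quasi-homogeneity $E_{D_N}(t^\alpha)=q_\alpha t^\alpha$ of the flat coordinates from~\eqref{eq:homogeneity of flat coordinates}, Euler's identity then yields $\sum_{\alpha=1}^N q_\alpha t^\alpha\frac{\d F^o_{D_N}}{\d t^\alpha}+\frac{1-\delta}{2}s\frac{\d F^o_{D_N}}{\d s}=\frac{3-\delta}{2}F^o_{D_N}$; since $F^o_{D_N}$ is a sum of monomials all of the same weighted degree, the correction terms $D_\alpha t^\alpha+\tD s+E$ allowed by~\eqref{eq:homogeneity for Fo} are forced to vanish.

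For the second assertion I would simply expand $F^o_{D_N}$ as a Laurent series in $s$. Since $t^{N+1}=v_{N+1}=s$ while $t^1,\ldots,t^N$ depend only on $v_1,\ldots,v_N$, the functions $v_1(t^*),\ldots,v_N(t^*)$ are independent of $s$, so reading off coefficients from the explicit formula gives $\Coef_{s^{2k-1}}F^o_{D_N}=\frac{v_k}{2^{k-1}(2k-1)}$ for $1\le k\le N-1$ and $\Coef_{s^{-1}}F^o_{D_N}=\frac{v_N^2}{2}$; note that the pure power $\frac{s^{2N-1}}{2^{N-2}(2N-1)(2N-2)}$ sits in degree $2N-1\neq 2k-1$ and therefore causes no collision. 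Inverting these two relations produces exactly the claimed formulas for $v_k(t^1,\ldots,t^N)$.

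The main obstacle is not deep: everything rests on the theorem, and the only point demanding genuine care is the weight bookkeeping in the homogeneity argument, specifically verifying that the weight $\frac{1-\delta}{2}$ of $s$ is the one dictated by the homogeneity of the relation $2wx=v_{N+1}^2 w$, and checking that with this choice all monomials of $F^o_{D_N}$ land in the single weighted degree $\frac{3-\delta}{2}$, so that the linear and constant corrections in~\eqref{eq:homogeneity for Fo} drop out.
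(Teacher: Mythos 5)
Your proposal is correct and follows essentially the same route as the paper, which states this corollary as an immediate consequence of the $D_N$ theorem together with the Rossi correspondence of Section~\ref{subsection:flat F-manifolds and open WDVV}, leaving the remaining checks implicit. Your weight bookkeeping (each monomial of $F^o_{D_N}$ having degree $\frac{3-\delta}{2}$ under the weights $q_k$ and $\frac{1-\delta}{2}=\frac{q_x}{2}$ for $s$) and the extraction of $v_k$ from the Laurent coefficients in $s$ are exactly the routine verifications the paper omits.
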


\begin{example} Here are the Frobenius manifold potentials for the singularities $D_4$ and $D_5$ together with the constructed solutions of the open WDVV equations\footnote{We follow the convention of B.~Dubrovin and use variables with lower indices for the flat coordinates in particular examples.}.
\begin{align*}
F_{D_4} &= \frac{1}{2} t_1^2 t_3 + \frac{1}{2} t_1 t_2^2-\frac{1}{2} t_1 t_4^2 - \frac{1}{4} t_2 t_3 t_4^2 -\frac{1}{12} t_2^3 t_3  +\frac{1}{24} t_2^2 t_3^3 -\frac{1}{24} t_3^3t_4^2 + \frac{t_3^7}{3360},\\
F^o_{D_4} &= \frac{s^7}{168}+\frac{t_3 s^5}{20}  +\left(\frac{t_3^2}{8}+\frac{t_2}{6}\right) s^3+\left(\frac{t_3^3}{12}+\frac{t_2 t_3}{2}+t_1\right) s+\frac{t_4^2}{2 s},\\
F_{D_5} &= \frac{1}{2} t_1^2 t_4 + t_1 t_2 t_3 - \frac{1}{2} t_1 t_5^2  - \frac{1}{6} t_2 t_3^3 - \frac{1}{4} t_2^2 t_3 t_4 + \frac{1}{24} t_2^2 t_4^3 + \frac{1}{6}t_2^3 - \frac{1}{48} t_3^3 t_4^3  + \frac{1}{16} t_3^4 t_4 + \frac{1}{8} t_2 t_3^2 t_4^2+ \\
& \quad + \frac{1}{32256}t_4^9 - \frac{1}{4} t_2 t_4 t_5^2 - \frac{1}{8} t_3 t_4^2 t_5^2 - \frac{1}{64} t_4^4 t_5^2 - \frac{1}{8} t_3^2 t_5^2 + \frac{1}{160} t_3^2 t_4^5,\\
F^o_{D_5} &= \frac{s^9}{576}+\frac{t_4 s^7}{56} +\left(\frac{t_4^2}{16}+\frac{t_3}{20}\right) s^5+\left(\frac{t_4^3}{12}+\frac{t_3 t_4}{4}+\frac{t_2}{6}\right) s^3+\left(\frac{t_4^4}{32}+\frac{t_3 t_4^2}{4} +\frac{t_2 t_4}{2}+\frac{t_3^2}{4}+t_1\right) s+\frac{t_5^2}{2 s}.
\end{align*}
\end{example}


\section{Polynomial solutions of the open WDVV equations for finite irreducible Coxeter groups}\label{sec: polynomial solutions}

In this section we first recall a description of the Frobenius manifolds corresponding to finite irreducible Coxeter groups, and then describe the space of homogeneous polynomial solutions of the associated open WDVV equations.

\subsection{Frobenius manifolds of finite irreducible Coxeter groups}

{\it Finite Coxeter groups} are finite groups of linear transformations of a real $N$-dimensional vector space~$V$, generated by reflections. The complete list of finite irreducible Coxeter groups is given by (the dimension of the space $V$ equals the subscript in the name of the group)
\begin{align}
  & A_N, N\ge 1 && D_N, N \ge 4, && E_6, && E_7, && E_8, &&\label{eq: ADE coxeters}\\
  & B_N, N \ge 2, && F_4, && H_3, && H_4, && I_2(k), k\ge 3, \label{eq: nonADE coxeters}
\end{align}
with the exceptional isomorphisms $A_2\cong I_2(3)$ and $B_2\cong I_2(4)$. By a construction of B.~Dubrovin \cite{Dub98}, for such a group $W$ the complexified space of orbits $M_W := (V \otimes \mbC)/W\cong\mbC^N$ carries a Frobenius manifold structure. For the Coxeter groups $A_N$, $D_N$ and $E_N$ the corresponding Frobenius manifolds coincide with the Saito Frobenius manifolds of simple singularities. By a result of~J.-B. Zuber \cite{Zub94}, the Frobenius manifold potentials corresponding to the remaining irreducible Coxeter groups can be explicitly described by
\begin{align}
F_{B_N}(t^1,\ldots,t^N)=&F_{A_{2N-1}}(t^1,0,t^2,0,\ldots,t^{N-1},0,t^N),\label{eq:formula for non-ADE}\\
F_{I_2(k)}(t^1,t^2)=&F_{A_{k-1}}(t^1,0,\ldots,0,t^2),\notag\\
F_{F_4}(t^1,t^2,t^3,t^4)=&F_{E_6}(t^1,0,t^2,t^3,0,t^4),\notag\\
F_{H_4}(t^1,t^2,t^3,t^4)=&F_{E_8}(t^1,0,t^2,0,0,t^3,0,t^4),\notag\\
F_{H_3}(t^1,t^2,t^3)=&F_{D_6}(t^1,0,t^2,0,t^3,\sqrt{-1}t^2).\notag
\end{align}
All the Frobenius manifolds corresponding to finite irreducible Coxeter groups are semisimple.

\subsection{Euler vector field}

We see that for any finite irreducible Coxeter group $W$, acting on an $N$-dimensional real vector space $V$, the associated Frobenius manifold is described by the polynomial potential $F_W(t^1,\ldots,t^N)$ satisfying the quasi-homogeneity condition 
$$
\sum_{\alpha=1}^N q_\alpha t^\alpha\frac{\d F_W}{\d t^\alpha}=(3-\delta)F_W,\quad q_\alpha>0.
$$
The numbers $q_1,\ldots,q_N$ have the following interpretation. Consider the symmetric algebra~$S(V\otimes\mbC)$. The subring $S(V\otimes\mbC)^W$ of $W$-invariant polynomials is generated by $N$ algebraically independent homogeneous polynomials, whose degrees $d_1,\ldots,d_N\ge 2$ are uniquely determined by the Coxeter group. The maximal degree $h$ is called the {\it Coxeter number} of $W$. Then we have
$$
q_\alpha=\frac{d_\alpha}{h},\qquad \delta=1-\frac{2}{h}.
$$
Note that then in the homogeneity condition~\eqref{eq:homogeneity for Fo} for solutions of the open WDVV equations the degree of the extra variable $s$ becomes
$$
\frac{1-\delta}{2}=\frac{1}{h}.
$$

\subsection{Homogeneous polynomial solutions of the open WDVV equations}

In this section we describe the space of homogeneous polynomial solutions of the open WDVV equations associated to the Frobenius manifolds of finite irreducible Coxeter groups. It occurs that for the Coxeter groups different from $A_N$, $B_N$ and $I_2(k)$ there are no such solutions. We prove it in Section~\ref{subsubsection:nonABI}. For the groups $A_N$, $B_N$ and $I_2(k)$ all solutions can be obtained from the function~$F^o_{A_N}$, as is explained in Section~\ref{subsubsection:ABI}. 

Consider an irreducible Coxeter group $W$, the potential $F_W$ and a homogeneous polynomial solution $F^o$ of the open WDVV equations, satisfying~\eqref{eq:unit condition for Fo}. Note that equations~\eqref{eq:open WDVV,1}-\eqref{eq:unit condition for Fo} involve only the second partial derivatives of $F^o$ and that adding constant and linear terms in the variables $t^1,\ldots,t^N$ and $s$ to $F^o$ just changes the constants $D_\alpha$, $\tD$ and $E$ in condition~\eqref{eq:homogeneity for Fo}. If we remove constant and linear terms in the variables $t^1,\ldots,t^N$ and $s$ from the function $F^o$, then it will satisfy the condition
\begin{gather}\label{eq:homogeneity for open Coxeter}
\sum_{\alpha=1}^N q_\alpha t^\alpha\frac{\d F^o}{\d t^\alpha}+\frac{1-\delta}{2}s\frac{\d F^o}{\d s}=\frac{3-\delta}{2}F^o.
\end{gather}

\subsubsection{Irreducible Coxeter groups different from $A_N$, $B_N$ and $I_2(k)$}\label{subsubsection:nonABI}

\begin{theorem}\label{theorem: coxeters}
Let $W$ be a finite irreducible Coxeter group different from $A_N$, $B_N$ and $I_2(k)$. Consider the corresponding Frobenius manifold potential $F_W$. Then there are no homogeneous polynomial solutions $F^o$ of the associated open WDVV equations satisfying property~\eqref{eq:unit condition for Fo}.
\end{theorem}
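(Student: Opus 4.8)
The plan is to reduce the whole problem to a single function, $\Phi:=\frac{\d F^o}{\d s}$, and then rule out every group outside $A_N,B_N,I_2(k)$ by a homogeneity-plus-leading-order analysis. First I would exploit homogeneity. Since $q_\alpha=d_\alpha/h$, $\delta=1-\frac2h$, so $\frac{1-\delta}{2}=\frac1h$ and $\frac{3-\delta}{2}=1+\frac1h$, condition~\eqref{eq:homogeneity for open Coxeter} makes $F^o$ quasi-homogeneous of degree $1+\frac1h$ with $\deg s=\frac1h$ and $\deg t^\alpha=q_\alpha\ge\frac2h$. As $F^o$ is a polynomial in variables of positive degree, its $s$-degree is bounded (the top term is a constant multiple of $s^{h+1}$), and $\Phi$ is quasi-homogeneous of degree $1$, polynomial in $s$ of degree at most $h$ with constant leading coefficient. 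The unit condition~\eqref{eq:unit condition for Fo} gives $\frac{\d\Phi}{\d t^1}=1$, so $\Phi=t^1+c\,s^h+\sum_{j<h}B_j(t^2,\dots,t^N)s^j$ with $B_j$ quasi-homogeneous of degree $\frac{h-j}{h}$ and independent of $t^1$. In particular $B_j=0$ unless $h-j$ is a non-negative integer combination of $d_2,\dots,d_N$, which already forces $B_{h-1}=0$ and makes $B_{h-2}$ a constant multiple of the lowest coordinate $t^N$.

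Next I would reformulate the open WDVV equations in terms of $\Phi$. Writing $\phi_\nu:=\frac{\d\Phi}{\d t^\nu}$, $\psi:=\frac{\d\Phi}{\d s}$ and $c^o_{\alpha\beta}:=\frac{\d^2F^o}{\d t^\alpha\d t^\beta}$, equation~\eqref{eq:open WDVV,2} becomes the associativity relation
\[
\frac{\d^3F_W}{\d t^\alpha\d t^\beta\d t^\mu}\eta_W^{\mu\nu}\phi_\nu+c^o_{\alpha\beta}\psi=\phi_\alpha\phi_\beta ,
\]
which says that in the flat F-manifold of Section~\ref{subsection:flat F-manifolds and open WDVV} multiplication by $\frac{\d}{\d s}$ has rank one with unique nonzero eigenvalue $\psi$; since $\Phi$ has degree $1$, the Euler field $E$ of the extension satisfies $E\circ\frac{\d}{\d s}=\Phi\,\frac{\d}{\d s}$, so $\Phi$ is the extra canonical coordinate of the extension. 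Equation~\eqref{eq:open WDVV,1} provides the complementary constraints that express $c^o_{\alpha\beta}$ through $\Phi$. I would then read the boxed relation as an identity of polynomials in $s$: the Frobenius term has $s$-degree at most $h-2$, while $c^o_{\alpha\beta}\psi$ and $\phi_\alpha\phi_\beta$ reach $s$-degree $2h-4$, and matching the coefficients of $s$ from the top downward yields a recursion that determines the $B_j$ from the transition functions $v_k(t)$ of~\eqref{eq:Dn flat coordinates} (and their analogues), exactly as in the construction of Section~\ref{section: Mwext explicit}.

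The dichotomy then comes from whether this recursion can be solved inside the polynomial ring. For $A_N$ it reproduces $\frac{\d F^o_{A_N}}{\d s}=\frac{s^{N+1}}{N+1}+\sum_k v_k s^{k-1}$, and $B_N,I_2(k)$ follow from the restriction~\eqref{eq:formula for non-ADE}; these are the groups treated in Section~\ref{subsubsection:ABI}. For the remaining groups I would show the recursion is inconsistent. For the exceptional groups $E_6,E_7,E_8,F_4,H_3,H_4$ this is a finite computation using the explicit potentials together with~\eqref{eq:formula for non-ADE}. For the infinite family $D_N$ a uniform argument is required, and the obstruction is the one already visible in Section~\ref{section: Mwext explicit}: extending the $D_N$ Frobenius structure forces into $\Phi$ a term with a \emph{negative} power of $s$ (the pole $\frac{v_N^2}{2s}$ in $F^o_{D_N}$, differentiated), which a polynomial $F^o$ cannot produce.

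I expect the $D_N$ step to be the main obstacle. Concretely, one must prove, uniformly in $N$, that reproducing the multiplication constants $(c^{\ext}_v)^{N+1}_{a,b}$ of Lemma~\ref{lemma: Dn xp product} through a $\Phi$ that is polynomial in $s$ is impossible — i.e. that, when the recursion is run against the flat coordinates~\eqref{eq:Dn flat coordinates}, the interplay across all powers of $s$ between the quadratic terms $\phi_\alpha\phi_\beta$ and the linear Frobenius term inevitably generates a nonzero coefficient of a $1/s$-type contribution. Isolating this coefficient and showing it cannot vanish (for all $N\ge4$) is the crux; the exceptional cases, by contrast, should reduce to bounded and checkable linear algebra.
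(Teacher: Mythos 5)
Your setup --- the reduction to $\Phi=\frac{\partial F^o}{\partial s}$, the degree bounds on the coefficients $B_j$, and the rewriting of~\eqref{eq:open WDVV,2} --- is correct, but the proof stops exactly where the theorem lives: you never establish that the resulting recursion is inconsistent, neither for $D_N$ (which you explicitly flag as ``the crux'') nor for the exceptional groups (deferred to an unspecified finite computation). Moreover, the heuristic you offer in place of that argument is not logically valid: the fact that the explicit extension $F^o_{D_N}$ of Section~\ref{section: Mwext explicit} contains the pole $\frac{v_N^2}{2s}$ does not force \emph{every} solution to be non-polynomial, because extensions of a given Frobenius manifold are far from unique --- Theorem~\ref{theorem: An uniqueness} exhibits whole $\lambda$-families of solutions, including degenerate $\lambda=0$ branches for $B_N$ and for $I_2(k)$ with $k$ even, whose top $s$-coefficients vanish. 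For the same reason your ``top-down'' matching of $s$-coefficients is not guaranteed to pin down the $B_j$ at all: nothing rules out a solution whose leading coefficients vanish, which is precisely the branch that survives in the $B_N$ analysis, so the recursion need not terminate in a visible contradiction. In short, the impossibility statement --- uniform in $N$ for $D_N$ --- is asserted, not proved.

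For comparison, the paper's proof avoids any global or recursive analysis of the $s$-expansion. It works entirely at the origin: writing~\eqref{eq:open WDVV,2} in the unfolding coordinates $v_\mu$, it uses homogeneity only to kill a few specific derivatives at $v_*=s=0$, then differentiates the equation once by a well-chosen $v_\gamma$ and evaluates at the origin. For $D_N$ the choice $\alpha=2$, $\beta=\gamma=N$ works uniformly in $N$: since $\partial_y\Lambda_{D_N}=2xy+v_N$, one has $\frac{\partial}{\partial v_2}\circ\frac{\partial}{\partial v_N}=-\tfrac12 v_N\frac{\partial}{\partial v_1}$, so conditions~\eqref{eq:first condition for nonexistence}--\eqref{eq:second condition for nonexistence} hold with $A=-\tfrac12$, the quadratic terms drop out by degree counting ($q_2+q_N>\tfrac{3-\delta}{2}$ and $2q_N+\tfrac{1-\delta}{2}>\tfrac{3-\delta}{2}$), and the differentiated equation collapses to $A\cdot\frac{\partial^2F^o}{\partial v_1\partial s}\big|_0=A=0$ by the unit condition~\eqref{eq:unit condition for Fo} --- a contradiction. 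The exceptional groups are handled by the same one-line degree counts (plus a short explicit computation for $H_3$). If you want to salvage your plan, the missing ingredient is exactly such a non-vanishing statement extracted from a single coefficient at the origin; as written, your argument does not contain it.
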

\begin{proof}
As we already explained above, we can assume that $F^o$ doesn't contain constant and linear terms in the variables $t^1,\ldots,t^N$ and $s$ and satisfies condition~\eqref{eq:homogeneity for open Coxeter}.

Let $W$ be one of the groups $D_N$, $E_6$, $E_7$ or $E_8$. Let us rewrite equations~\eqref{eq:open WDVV,2} in the coordinates~$v_1,\ldots,v_N$ and $s$:
\begin{gather*}
(c_v)^\mu_{\alpha\beta}\frac{\d^2 F^o}{\d v_\mu\d s}+\frac{\d^2 F^o}{\d v_\alpha\d v_\beta}
\frac{\d^2 F^o}{\d s^2}+\frac{\d t^\talpha}{\d v_\alpha}\frac{\d t^\tbeta}{\d v_\beta}\frac{\d^2 v_\mu}{\d t^\talpha\d t^\tbeta}\frac{\d F^o}{\d v_\mu}\frac{\d^2 F^o}{\d s^2}=\frac{\d^2 F^o}{\d v_\alpha\d s}\frac{\d^2 F^o}{\d v_\beta\d s},\quad 1\le\alpha,\beta\le N,
\end{gather*}
where $(c_v)^\mu_{\alpha\beta}$ denotes the structure constants of multiplication in the coordinates~$v_\mu$. Clearly, $\left.\frac{\d F^o}{\d v_\mu}\right|_{v_*=s=0}=0$. Since $\delta\ge 0$, we have $\frac{3-\delta}{2}>2\cdot\frac{1-\delta}{2}$. This implies that $\left.\frac{\d^2 F^o}{\d s^2}\right|_{v_*=s=0}=0$. Therefore,
\begin{gather*}
\left.\frac{\d}{\d v_\gamma}\left((c_v)^\mu_{\alpha\beta}\frac{\d^2 F^o}{\d v_\mu\d s}\right)\right|_{v_*=s=0}+\left.\frac{\d}{\d v_\gamma}\left(\frac{\d^2 F^o}{\d v_\alpha\d v_\beta}\frac{\d^2 F^o}{\d s^2}\right)\right|_{v_*=s=0}=\left.\frac{\d}{\d v_\gamma}\left(\frac{\d^2 F^o}{\d v_\alpha\d s}\frac{\d^2 F^o}{\d v_\beta\d s}\right)\right|_{v_*=s=0},
\end{gather*}
for any indices $1\le\alpha,\beta,\gamma\le N$. We will prove that this equation can't be true by finding indices $2\le\alpha,\beta,\gamma \le N$ such that
\begin{align}
&\left.(c_v)^\mu_{\alpha\beta}\right|_{v_*=0}=0,&&\frac{\p(c_v)^\mu_{\alpha\beta}}{\p v_\gamma}=A\delta^{\mu,1},\quad A\in\mbC^*,\label{eq:first condition for nonexistence}\\
&\frac{\p^2 F^o}{\p v_\alpha \p v_\beta}=0,&&\frac{\p}{\p v_\gamma} \left(\frac{\p^2 F^o}{\p v_\alpha \p s}\frac{\p^2 F^o}{\p v_\beta \p s}\right) = 0.\label{eq:second condition for nonexistence}
\end{align}

\noindent{\it Case $W = D_N$, $N \ge 4$.} We have $\delta=\frac{N-2}{N-1}$, $\frac{1-\delta}{2}=\frac{1}{2(N-1)}$ and $q_k=\begin{cases}\frac{N-k}{N-1},&\text{if $1\le k\le N-1$},\\\frac{N}{2(N-1)},&\text{if $k=N$}.\end{cases}$ Let us choose $\alpha=2$ and $\beta=\gamma=N$. From $\frac{\p\Lambda_{D_N}}{\p y} = 2xy + v_N$ we see that $\frac{\p}{\p v_2} \circ \frac{\p}{\p v_N} = - \frac{1}{2}v_N \frac{\p}{\p v_1}$, that implies the properties in line~\eqref{eq:first condition for nonexistence}. We have 
\begin{align*}
&q_2 + q_N = \frac{N-2}{N-1} + \frac{N}{2(N-1)} = \frac{3N-4}{2(N-1)} > \frac{2N-1}{2(N-1)} = \frac{3-\delta}{2}&& \hspace{-0.3cm}\Rightarrow &&\hspace{-0.3cm} \frac{\p^2 F^o}{\p v_2 \p v_N} = \frac{\p^3 F^o}{\p v_2 \p v_N\p s} = 0,\\
&2 q_N + \frac{1-\delta}{2} = \frac{2N}{2(N-1)} + \frac{1}{2(N-1)} = \frac{2N+1}{2(N-1)} > \frac{3-\delta}{2} && \hspace{-0.3cm}\Rightarrow &&\hspace{-0.3cm} \frac{\p^3 F^o}{\p v_N^2 \p s} = 0,
\end{align*}
that gives the properties in line~\eqref{eq:second condition for nonexistence}. So the theorem is proved for the case $W=D_N$.

\bigskip

\noindent{\it Case $W = E_6$.} We have $\delta=\frac{5}{6}$, $\frac{1-\delta}{2}=\frac{1}{12}$ and $(q_1,\ldots,q_6)=\left(1,\frac{3}{4},\frac{2}{3},\frac{1}{2},\frac{5}{12},\frac{1}{6}\right)$. Let us choose $\alpha=\beta=\gamma=3$. From $\frac{\p\Lambda_{E_6}}{\p y} = 3y^2 + v_3 + v_5 x + v_6 x^2$ we see that $\frac{\p}{\p v_3} \circ \frac{\p}{\p v_3} = - \frac{1}{3} v_3 \frac{\p}{\p v_1}-\frac{1}{3}v_5\frac{\d}{\d v_2}-\frac{1}{3}v_6\frac{\d}{\d v_4}$, that implies the properties in line~\eqref{eq:first condition for nonexistence}. We have $2q_3 = \frac{4}{3} > \frac{13}{12}=\frac{3-\delta}{2}$, implying $\frac{\p^2 F^o}{\p v_3^2}=\frac{\p^3 F^o}{\p v_3^2\p s} = 0$, that gives the properties in line~\eqref{eq:second condition for nonexistence} and proves the theorem for $W=E_6$.

\bigskip

\noindent{\it Case $W = E_7$.} We have $\delta=\frac{8}{9}$, $\frac{1-\delta}{2}=\frac{1}{18}$ and $(q_1,\ldots,q_7)=\left(1,\frac{7}{9},\frac{2}{3},\frac{5}{9},\frac{4}{9},\frac{1}{3},\frac{1}{9}\right)$. Choose $\alpha=3$, $\beta=4$ and $\gamma=2$. From $\frac{\p \Lambda_{E_7}}{\p x} = 3 x^2y + v_2 + 2 v_4 x + v_5 y + 3 v_6 x^2 + 4 v_7 x^3$ we see that $\frac{\p}{\p v_3} \circ \frac{\p}{\p v_4} = - \frac{1}{3} v_2 \frac{\p}{\p v_1} -\frac{2}{3}v_4\frac{\d}{\d v_2}-\frac{1}{3}v_5\frac{\d}{\d v_3}-v_6\frac{\d}{\d v_4}-\frac{4}{3}v_7\frac{\d}{\d v_6}$, that implies the properties in line~\eqref{eq:first condition for nonexistence}. We have
\begin{align*}
&q_3 + q_4 = \frac{11}{9} > \frac{19}{18} = \frac{3-\delta}{2} && \Rightarrow && \frac{\p^2 F^o}{\p v_3 \p v_4} = 0,\\
&q_2 + q_3 + \frac{1-\delta}{2} = \frac{3}{2} > \frac{3-\delta}{2} && \Rightarrow && \frac{\p^3 F^o}{\p v_2 \p v_3 \p s} = 0,\\
&q_2 + q_4 + \frac{1-\delta}{2} = \frac{25}{18} > \frac{3-\delta}{2} && \Rightarrow && \frac{\p^3 F^o}{\p v_2 \p v_4 \p s} = 0,
\end{align*}
that implies the properties in line~\eqref{eq:second condition for nonexistence}. This proves the theorem for $W=E_7$.
  
\bigskip  
  
\noindent{\it Case $W = E_8$.} We have $\delta=\frac{14}{15}$, $\frac{1-\delta}{2}=\frac{1}{30}$ and $(q_1,\ldots,q_8)=\left(1,\frac{4}{5},\frac{2}{3},\frac{3}{5},\frac{7}{15},\frac{2}{5},\frac{4}{15},\frac{1}{15}\right)$. Choose $\alpha=\beta=\gamma=3$. From $\frac{\p\Lambda_{E_8}}{\p y} = 3 y^2 + v_3 + v_5 x + v_7 x^2 + v_8 x^3$ we see that $\frac{\p}{\p v_3} \circ \frac{\p}{\p v_3} = - \frac{1}{3} v_3 \frac{\p}{\p v_1} -\frac{1}{3}v_5\frac{\d}{\d v_2}-\frac{1}{3}v_7\frac{\d}{\d v_4}-\frac{1}{3}v_8\frac{\d}{\d v_6}$, that implies the properties in line~\eqref{eq:first condition for nonexistence}. We have $2q_3 = \frac{4}{3} > \frac{31}{30} = \frac{3-\delta}{2}$, implying $\frac{\p^2 F^o}{\p v_3^2}=\frac{\p^3 F^o}{\p v_3^2\d s} =0$, that completes the proof of the theorem for $W=E_8$.

\bigskip

For the groups $H_3$, $H_4$ and $F_4$ we are going to use the explicit formulas for the corresponding Frobenius potentials from the paper~\cite{Zub94}. Note that these potentials are related to the ones, given by~\eqref{eq:formula for non-ADE}, by certain rescallings $F_W(t^1,\ldots,t^N)\mapsto F_W(\lambda_1 t^1,\ldots,\lambda_N t^N)$, $\lambda_i\in\mbC^*$, but this doesn't affect our proof.

For the groups $F_4$ and $H_4$ the corresponding potentials, computed in~\cite{Zub94}, are
\begin{align*}
F_{F_4}=&\frac{t_4^{13}}{185328}+\frac{t_3^2 t_4^7}{252}+\frac{t_2^2 t_4^5}{60}+\frac{t_2 t_3^2 t_4^3}{6}+\frac{t_3^4 t_4}{12}+\frac{t_2^3 t_4}{6}+\frac{t_1^2 t_4}{2}+t_1 t_2 t_3,\\
F_{H_4}=&\frac{t_4^{31}}{245764125000}+\frac{t_3^2 t_4^{19}}{1539000}+\frac{t_3^3 t_4^{13}}{10800}+\frac{t_2^2 t_4^{11}}{4950}+\frac{t_2 t_3^2 t_4^9}{360}+\frac{t_3^4 t_4^7}{120} +\frac{t_2^2 t_3 t_4^5}{20}+\frac{t_2 t_3^3 t_4^3}{6}+\frac{t_3^5 t_4}{20}\\
&+\frac{t_2^3 t_4}{6}+\frac{t_1^2 t_4}{2} +t_1 t_2 t_3.
\end{align*}
Note that the equation
\begin{gather*}
\left.\frac{\d}{\d t_2}\left(c^\mu_{2,2}\frac{\d^2 F^o}{\d t_\mu\d s}\right)\right|_{t_*=s=0}+\left.\frac{\d}{\d t_2}\left(\frac{\d^2 F^o}{\d t_2\d t_2}\frac{\d^2 F^o}{\d s^2}\right)\right|_{t_*=s=0}=\left.\frac{\d}{\d t_2}\left(\frac{\d^2 F^o}{\d t_2\d s}\frac{\d^2 F^o}{\d t_2\d s}\right)\right|_{t_*=s=0},
\end{gather*}
where $c^\gamma_{\alpha\beta}$ are the structure constants of multiplication in the coordinates $t_\mu$, can't be true, because $\left.c^\mu_{2,2}\right|_{t_*=0}=0$, $\frac{\p c^\mu_{2,2}}{\p t_2}=\delta^{\mu,1}$ and $\frac{\p^2 F^o}{\p t_2 \p t_2}=0$, that follows from~\eqref{eq:homogeneity for open Coxeter}.

For the group $H_3$ the Frobenius manifold, computed in~\cite{Zub94}, is
\[
    F_{H_3} = \frac{1}{2} t_1^2 t_3 + \frac{1}{2} t_1 t_2^2 +\frac{1}{20} t_2^2 t_3^5 + \frac{1}{6} t_2^3 t_3^2 + \frac{t_3^{11}}{3960}.
\]
The general form of a polynomial function $F^o_{H_3}(t_1,t_2,t_3,s)$ satisfying~\eqref{eq:unit condition for Fo} and~\eqref{eq:homogeneity for open Coxeter} is
\[
    F_{H_3}^o = s t_1  + c_9 s t_2 t_3^2 + c_8 s^3 t_2 t_3  + c_7 s^5 t_2   + c_6 s t_3^5  + c_5 s^3 t_3^4 + c_4 s^5 t_3^3  + c_3 s^7 t_3^2  + c_2 s^9 t_3 + c_1 s^{11},\quad c_k\in\mbC.
\]
Suppose that it satisfies equation~\eqref{eq:open WDVV,2} with $\alpha = 3$, $\beta = 2$. A direct computation shows that, applying the derivative $\frac{\d^2}{\d t_2^2}$ to both sides of it, we get $2$ on the left-hand side and $0$ on the right-hand side. This contradition proves the theorem for the case of the group $H_3$.
\end{proof}

\subsubsection{Coxeter groups $A_N$, $B_N$ and $I_2(k)$}\label{subsubsection:ABI}

Define
\begin{align*}
F^o_{B_N}(t^1,\ldots,t^N,s):=&F^o_{A_{2N-1}}(t^1,0,t^2,0,\ldots,t^{N-1},0,t^N,s),&& N\ge 2,\\
F^o_{I_2(k)}(t^1,t^2,s):=&F^o_{A_{k-1}}(t^1,0,\ldots,0,t^2,s),&& k\ge 3.
\end{align*}
Let $F^{o,-}_{I_2(k)}:=2t^1 s-F^o_{I_2(k)}$ and denote $F^{o,+}_{I_2(k)}:=F^o_{I_2(k)}$. 

Note that if a function $F^o(t^1,\ldots,t^N,s)$ satisfies the open WDVV equations, then the function $\lambda^{-1}F^o(t^1,\ldots,t^N,\lambda s)$ also satisfies them for any $\lambda\ne 0$. Moreover, if $F^o|_{s=0}=0$, then the substitution $\left.\left(\lambda^{-1}F^o(t^1,\ldots,t^N,\lambda s)\right)\right|_{\lambda=0}$ is well defined and is a solution of the open WDVV equations.

\begin{theorem}\label{theorem: An uniqueness}
Let $W$ be one of the groups $A_N$, $B_N$ or $I_2(k)$. Then all solutions $F^o$ of the open WDVV equations satisfying~\eqref{eq:unit condition for Fo} and~\eqref{eq:homogeneity for open Coxeter} are given by the family 
$$
F^o=
\left\{
\begin{aligned}
&\lambda^{-1}F^o_{A_N}(t^1,\ldots,t^N,\lambda s),&&\lambda\in\mbC^*,&&\text{if $W=A_N$, $N\ge 2$},\\
&\lambda^{-1}F^o_{A_1}(t^1,\lambda s),&&\lambda\in\mbC,&&\text{if $W=A_1$},\\
&\lambda^{-1}F^o_{B_N}(t^1,\ldots,t^N,\lambda s),&&\lambda\in\mbC,&&\text{if $W=B_N$, $N\ge 2$},\\
&\lambda^{-1}F^o_{I_2(k)}(t^1,t^2,\lambda s),&&\lambda\in\mbC^*,&&\text{if $W=I_2(k)$, $k$ is odd},\\
&\lambda^{-1}F^{o,\pm}_{I_2(k)}(t^1,t^2,\lambda s),&&\lambda\in\mbC,&&\text{if $W=I_2(k)$, $k$ is even}.
\end{aligned}\right.
$$
\end{theorem}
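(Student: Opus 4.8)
The plan is to take the $A_N$ solution as the source, transport it to $B_N$ and $I_2(k)$ through the foldings \eqref{eq:formula for non-ADE}, and split the argument into an existence part (the listed functions solve the equations) and a completeness part (there are no others). Throughout I would use Rossi's correspondence from Section~\ref{subsection:flat F-manifolds and open WDVV} and work with $\phi:=\frac{\partial F^o}{\partial s}$, whose derivatives $\xi_\alpha:=\frac{\partial^2 F^o}{\partial t^\alpha\partial s}=\partial_\alpha\phi$ and $\xi_{N+1}:=\frac{\partial^2 F^o}{\partial s^2}=\partial_s\phi$ are the structure constants of multiplication by $\frac{\partial}{\partial t^\alpha}$ and $\frac{\partial}{\partial s}$ on the one-dimensional ideal $\left\langle\frac{\partial}{\partial s}\right\rangle$ of the extended algebra. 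The observation that makes everything run is that equation \eqref{eq:open WDVV,2} is literally the multiplicativity relation $c^\nu_{\alpha\beta}\xi_\nu+c^{N+1}_{\alpha\beta}\xi_{N+1}=\xi_\alpha\xi_\beta$, i.e.\ the statement that $\xi$ is an algebra character of the extended algebra.

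\emph{Existence.} For $A_N$ the function $F^o_{A_N}$ is a solution by the first theorem of Section~\ref{section: Mwext explicit}, and the rescaling observation preceding the statement gives the whole family $\lambda^{-1}F^o_{A_N}(t^1,\dots,t^N,\lambda s)$. For $B_N$ and $I_2(k)$ I would show that the restrictions defining $F^o_{B_N}$ and $F^o_{I_2(k)}$ solve the corresponding open WDVV equations by the same mechanism that, in \eqref{eq:formula for non-ADE}, makes $F_{A_{2N-1}}$ and $F_{A_{k-1}}$ restrict to $F_{B_N}$ and $F_{I_2(k)}$: the folding subspace is the invariant locus of the grading symmetry $x\mapsto-x$ of the $A$-superpotential and is closed under the extended multiplication, so the extended flat F-manifold restricts to it. For even $k$ I would verify that $F^{o,-}_{I_2(k)}=2t^1s-F^o_{I_2(k)}$ is again a solution by a direct computation: under $F^o\mapsto 2t^1s-F^o$ all second derivatives change sign except $\frac{\partial^2 F^o}{\partial t^1\partial s}$, which the unit normalisation \eqref{eq:unit condition for Fo} keeps equal to $1$, and the resulting sign bookkeeping in \eqref{eq:open WDVV,1}--\eqref{eq:open WDVV,2} closes up precisely because the $I_2(k)$ Frobenius manifold is two-dimensional (so that $c^2_{22}=0$).

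\emph{Uniqueness for $A_N$.} Formula \eqref{eq:FoAN and the flat coordinates} says that for the reference solution $\phi$ equals the superpotential $\Lambda_{A_N}$ evaluated at $x=s$, so I would argue that the same holds, up to rescaling, for any homogeneous polynomial solution. Since the Saito Frobenius algebra of $A_N$ is generically semisimple, a character of the extended algebra that does not vanish on the ideal is the evaluation at a single extra point $x=\sigma$ of the $x$-line, distinct from the $N$ critical points of $\Lambda_{A_N}$; hence $\xi_\alpha$ is the value at $x=\sigma$ of the class representing $\frac{\partial}{\partial t^\alpha}$, with $\sigma=\sigma(t,s)$ a priori unknown. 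To pin down $\sigma$ I would use the homogeneity \eqref{eq:homogeneity for open Coxeter}: writing $\phi=\sum_{j\ge0}\phi_j(t)\,s^j$ the weighted degrees force only finitely many $\phi_j$, with $\phi_{N+1}$ constant, and feeding this expansion into \eqref{eq:open WDVV,1}--\eqref{eq:open WDVV,2} turns the character equation into a recursion that is linear in the top unknown at each step. I expect the crux of the whole theorem to be precisely the rigidity of this recursion: one must show that it is genuinely triangular, so that the nonlinearity creates no spurious branch and, in particular, that the ideal cannot align with an interior critical point of $\Lambda_{A_N}$ (which would make $\xi_\alpha$ independent of $s$ and violate $\xi_1=1$ together with homogeneity). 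Granting this, the only freedom left is the scaling of $\phi_{N+1}$, which is exactly the action $\phi_j\mapsto\lambda^j\phi_j$ of the rescaling, so $\phi=\Lambda_{A_N}(\lambda s)$ and $F^o=\lambda^{-1}F^o_{A_N}(t,\lambda s)$ after the $s$-independent terms are fixed by \eqref{eq:unit condition for Fo}.

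\emph{Range of $\lambda$ and the folded groups.} By the degeneration observation preceding the statement, $\lambda=0$ is admissible exactly when $F^o|_{s=0}=0$. Using the coefficient formula for $F^o_{A_N}$ from Section~\ref{section:extended r-spin}, the $s$-independent part is a nonzero quadratic form for $N\ge2$ but vanishes for $A_1$; after folding, a parity count on the surviving indices shows $F^o|_{s=0}=0$ for every $B_N$ and for $I_2(k)$ with $k$ even, while a genuine $s$-independent monomial survives for $I_2(k)$ with $k$ odd. This produces the ranges $\lambda\in\mbC^*$ for $A_N$ with $N\ge2$ and for odd $I_2(k)$, and $\lambda\in\mbC$ otherwise. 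The completeness statement for $B_N$ and $I_2(k)$ I would obtain by repeating the character-and-recursion argument verbatim for their (semisimple) Frobenius manifolds; the only new phenomenon is the reflection for even $k$, which is explained by the parity of the $I_2(k)$-superpotential: for odd $k$ the map $F^o\mapsto 2t^1s-F^o$ reproduces the $\lambda=-1$ member of the family, whereas for even $k$ the rescaling by $\lambda=-1$ acts trivially on $\phi$ and the reflection yields the genuinely independent second branch $F^{o,-}_{I_2(k)}$.
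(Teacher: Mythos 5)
Your reading of \eqref{eq:open WDVV,2} as the statement that the second $s$-derivatives form an algebra character of the extended algebra is correct and conceptually attractive, and your existence discussion (folding plus the rescaling/degeneration observation) is broadly consistent with what the paper does. But the theorem is a classification, and both places where the proposal addresses completeness contain genuine gaps. For $A_N$ you explicitly defer the crux: you say one must show the recursion is ``genuinely triangular, so that the nonlinearity creates no spurious branch,'' and that the character cannot align with a critical point, but you prove neither. This is exactly the content of the paper's argument, and it is done by concrete correlator computations rather than by the semisimple-geometry picture: setting $t^*=0$ in \eqref{eq:open WDVV,2} with $\alpha=2$ gives $\<\tau_\alpha\sigma^\alpha\>^o=(\alpha-1)!\left(\<\tau_2\sigma^2\>^o\right)^{\alpha-1}$ and $\<\tau_2\tau_N\>^o\<\sigma^{N+2}\>^o=N!\left(\<\tau_2\sigma^2\>^o\right)^N$; differentiating the same equation with $\beta=N$ by $\d/\d t^2$ and invoking the explicit four-point numbers \eqref{eq:3-point and 4-point for AN} yields $-1+\<\tau_2\tau_N\>^o\<\tau_2\sigma^2\>^o=0$, which forces $\<\tau_2\sigma^2\>^o\ne 0$ (this is the non-degeneracy you only postulate); and then the recursion \eqref{eq:recursion for AN} reconstructs every coefficient from $\<\sigma^{N+2}\>^o$ and $\<\tau_\alpha\sigma^\alpha\>^o$. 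Without carrying out some version of these steps, your $A_N$ uniqueness is a plan, not a proof.

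The second gap is your claim that completeness for $B_N$ and $I_2(k)$ follows ``by repeating the character-and-recursion argument verbatim.'' It cannot, because for these groups the classification includes $\lambda=0$, and at the degenerate solution the extended algebra is not semisimple: there $\frac{\d^2F^o}{\d s^2}=0$, so $\frac{\d}{\d s}\circ\frac{\d}{\d s}=0$ and the ideal is nilpotent, which destroys the ``evaluation at an extra point away from the critical locus'' picture your mechanism relies on. Your parity count only shows the degenerate solutions \emph{exist}; it does not show that a solution whose relevant leading coefficient vanishes must \emph{equal} the degenerate one. The paper needs a separate argument precisely here: for $B_N$ it splits on whether $\<\tau_2\sigma^3\>^o$ vanishes, and in the vanishing case writes $F^o=\sum_k P_k(t^*)s^{2k+1}$, shows the top index $l$ with $P_l\ne 0$ cannot satisfy $0<l<N$ because \eqref{eq:open WDVV,2} would force the polynomial $P_l$ to solve $(f')^2=\tfrac{2l}{2l+1}ff''$, whose nonconstant solutions $C_1(x+C_2)^{-2l}$ are never polynomials, and then pins down $P_0$ by a recursion; for even $I_2(k)$ the explicit coefficient equations determine the top coefficient only up to sign, which is where the two branches $F^{o,\pm}_{I_2(k)}$ come from. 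None of this degenerate-branch analysis is present in your proposal, and it is not a formality: it is what distinguishes the $\lambda\in\mbC$ cases from the $\lambda\in\mbC^*$ cases in the statement you are proving.
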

\begin{proof}
{\it Case $W=A_N$}. We have $q_\alpha=\frac{N+2-\alpha}{N+1}$ and $\delta=\frac{N-1}{N+1}$. The case $N=1$ is obvious. Suppose that $N\ge 2$ and $F^o$ is a solution of the open WDVV equations, satisfying~\eqref{eq:unit condition for Fo} and~\eqref{eq:homogeneity for open Coxeter}. For an $n$-tuple $\oalpha=(\alpha_1,\ldots,\alpha_n)$, $1\le\alpha_i\le N$, denote
$$
\<\tau_\oalpha\sigma^k\>^o=\<\tau_{\alpha_1}\ldots\tau_{\alpha_n}\sigma^k\>^o:=\left.\frac{\d^{n+k}F^o}{\d t^\alpha_1\ldots\d t^{\alpha_n}\d s^k}\right|_{t^*=s=0}.
$$ 
This number is non-zero only if $k=k(\oalpha):=N+2-\sum_{i=1}^n(N+2-\alpha_i)$. 

Note that 
\begin{gather}\label{eq:c2betagamma}
c^\gamma_{2,\beta}=
\begin{cases}
0,&\text{if $\gamma>\beta+1$},\\
1,&\text{if $\gamma=\beta+1$},\\
O(t^*),&\text{if $\gamma\le\beta$},
\end{cases}
\end{gather}
that follows from~\eqref{eq:3-point and 4-point for AN}. Setting $t^*=0$ in equation~\eqref{eq:open WDVV,2} with $\alpha=2$ and $2\le\beta\le N$, we get
\begin{gather*}
\<\tau_\alpha\sigma^\alpha\>^o=(\alpha-1)!\left(\<\tau_2\sigma^2\>^o\right)^{\alpha-1},\quad 2\le\alpha\le N,\qquad\<\tau_2\tau_N\>^o\<\sigma^{N+2}\>^o=N!\left(\<\tau_2\sigma^2\>^o\right)^N.
\end{gather*}
Differentiating equation~\eqref{eq:open WDVV,2} with $\alpha=2$ and $\beta=N$ by~$\frac{\d}{\d t^2}$ and setting $t^*=s=0$, we get $-1+\<\tau_2\tau_N\>^o\<\tau_2\sigma^2\>^o=0$, where we use formula~\eqref{eq:3-point and 4-point for AN} for the numbers $\<\tau_{\alpha_1}\tau_{\alpha_2}\tau_{\alpha_3}\tau_{\alpha_4}\>_{A_N}$. We see that $\<\tau_2\sigma^2\>^o\ne 0$ and
$$
\<\sigma^{N+2}\>^o=N!\left(\<\tau_2\sigma^2\>^o\right)^{N+1}\ne 0.
$$

After the rescaling $F^o(t^1,\ldots,t^N,s)\mapsto\lambda^{-1}F^o(t^1,\ldots,t^N,\lambda s)$ with an appropriate constant $\lambda\ne 0$ we get $\<\tau_2\sigma^2\>^o=1$ and, therefore,
\begin{gather*}
\<\tau_\alpha\sigma^\alpha\>^o=(\alpha-1)!=\<\tau_\alpha\sigma^\alpha\>^o_{A_N},\quad 1\le\alpha\le N,\qquad\<\sigma^{N+2}\>^o=N!=\<\sigma^{N+2}\>^o_{A_N}.
\end{gather*}

Consider now an $n$-tuple $\oalpha=(\alpha_1,\ldots,\alpha_n)$, $1\le\alpha_i\le N$, with $n\ge 2$ and $k(\oalpha)\ge 0$. Differentiating equation~\eqref{eq:open WDVV,2} with $\alpha=\alpha_1$ and $\beta=\alpha_2$ by $\frac{\d^{n-2}}{\d t^{\alpha_3}\ldots\d t^{\alpha_n}}$ and setting $t^*=0$, we get the recursion
\begin{gather}\label{eq:recursion for AN}
\frac{\<\tau_{\oalpha}\sigma^{k(\oalpha)}\>^o}{k(\oalpha)!}=\sum_{\substack{I\sqcup J=\{1,\ldots,n\}\\1\in I,\,2\in J}}\frac{\<\tau_{\oalpha_I}\sigma^{k(\oalpha_I)}\>^o\<\tau_{\oalpha_J}\sigma^{k(\oalpha_J)}\>^o}{(k(\oalpha_I)-1)!(k(\oalpha_J)-1)!}-\sum_{\substack{I\sqcup J=\{1,\ldots,n\}\\1,2\in I,\,J\ne\emptyset}}\frac{\<\tau_{\oalpha_I}\sigma^{k(\oalpha_I)}\>^o\<\tau_{\oalpha_J}\sigma^{k(\oalpha_J)}\>^o}{k(\oalpha_I)!(k(\oalpha_J)-2)!},
\end{gather}
where for a subset $I=\{i_1,\ldots,i_{|I|}\}\subset\{1,\ldots,n\}$, $i_1<\ldots<i_{|I|}$, we denote $\oalpha_I:=(\alpha_{i_1},\ldots,\alpha_{i_{|I|}})$. The correlators $\<\cdot\>_{A_N}$ don't appear in this recursion because for any subset $I\subset\{1,\ldots,n\}$ and an index $1\le\mu\le N$ we have
\begin{multline*}
\sum_{i\in I}(N+2-\alpha_i)+(N+2-\mu)\le\sum_{i=1}^n(N+2-\alpha_i)+(N+2-\mu)=\\
=2N+4-k(\oalpha)-\mu<2N+4
\end{multline*}
and, therefore, $\<\tau_{\oalpha_I}\tau_\mu\>_{A_N}=0$. The recursion~\eqref{eq:recursion for AN} determines all the numbers $\<\tau_{\oalpha}\sigma^{k(\oalpha)}\>^o$ starting from the numbers~$\<\sigma^{N+2}\>^o$ and $\<\tau_\alpha\sigma^\alpha\>^o$. So we conclude that $F^o=F^o_{A_N}$.

\bigskip

\noindent{\it Case $W=B_N$}. We have $q_\alpha=\frac{N+1-\alpha}{N}$ and $\delta=\frac{N-1}{N}$. The function $F^o_{B_N}$ satisfies the open WDVV equations together with equations~\eqref{eq:unit condition for Fo} and \eqref{eq:homogeneity for open Coxeter}, because, as one can easily check using the quasi-homogeneity of the function $F^o_{A_{2N-1}}$, the correlator $\<\tau_{\alpha_1}\ldots\tau_{\alpha_n}\tau_\mu\>_{A_{2N-1}}$ vanishes, if all the~$\alpha_i$'s are odd and $\mu$ is even.

Suppose that $F^o$ is a solution of the open WDVV equations, satisfying~\eqref{eq:unit condition for Fo} and~\eqref{eq:homogeneity for open Coxeter}. Since $F_{B_2}=F_{I_2(4)}$, we will consider the $B_2$-case together with the cases $W=I_2(k)$ later. So we assume that $N\ge 3$. Note that a correlator $\<\tau_{\alpha_1}\ldots\tau_{\alpha_n}\sigma^k\>^o$ vanishes unless $\sum_{i=1}^N(N+1-\alpha_i)+\frac{k}{2}=N+\frac{1}{2}$. Setting $t^*=0$ in equation~\eqref{eq:open WDVV,2} with $\alpha=2$, we get the relations
\begin{align*}
\<\tau_\alpha\sigma^{2\alpha-1}\>^o=&\frac{(2\alpha-2)!}{2^{\alpha-1}}\left(\<\tau_2\sigma^3\>^o\right)^{\alpha-1},\quad 2\le\alpha\le N,\\
\<\tau_2\tau_N\sigma\>^o\<\sigma^{2N+1}\>^o=&\frac{(2N-1)!}{2^N}\left(\<\tau_2\sigma^3\>^o\right)^N.
\end{align*}
Differentiating equation~\eqref{eq:open WDVV,2} with $\alpha=2$ and $\beta=N-1$ by $\frac{\d}{\d t^2}$ and setting $t^*=s=0$, we get $\<\tau_2^2\tau_{N-1}\tau_N\>_{B_N}+\<\tau_2\tau_N\sigma\>^o=0$. Since, by~\eqref{eq:3-point and 4-point for AN}, $\<\tau_2^2\tau_{N-1}\tau_N\>_{B_N}=-1$, we conclude that $\<\tau_2\tau_N\sigma\>^o=1$ and
$$
\<\sigma^{2N+1}\>^o=\frac{(2N-1)!}{2^N}\left(\<\tau_2\sigma^3\>^o\right)^N.
$$

Suppose that $\<\tau_2\sigma^3\>^o\ne 0$, then $\<\sigma^{2N+1}\>^o\ne 0$. After the rescaling $F^o(t^1,\ldots,t^N,s)\mapsto\lambda^{-1}F^o(t^1,\ldots,t^N,\lambda s)$ with an appropriate constant $\lambda\ne 0$ we get $\<\tau_\alpha\sigma^{2\alpha-1}\>^o=\<\tau_\alpha\sigma^{2\alpha-1}\>^o_{B_N}$ and $\<\sigma^{2N+1}\>^o=\<\sigma^{2N+1}\>^o_{B_N}$. In the same way, as in the $A_N$-case, there is a recursion similar to~\eqref{eq:recursion for AN}, that reconstructs all the correlators $\<\tau_{\alpha_1}\ldots\tau_{\alpha_n}\sigma^k\>^o$ with $n\ge 2$. Therefore, $F^o=F^o_{B_N}$.

Suppose that $\<\tau_2\sigma^3\>^o=0$, then $\<\sigma^{2N+1}\>^o=0$. Consider the decomposition
$$
F^o=\sum_{k=0}^N P_k(t^1,\ldots,t^N)s^{2k+1},\quad P_k\in\mbC[t^1,\ldots,t^N].
$$
Consider an index $l$ such that $P_l\ne 0$ and $P_{>l}=0$. Since $l<N$, the polynomial~$P_l$ can't be a constant. Suppose $l>0$, then equation~\eqref{eq:open WDVV,2} implies that
$$
\frac{\d P_l}{\d t^\alpha}\frac{\d P_l}{\d t^\beta}=\frac{2l}{2l+1}P_l\frac{\d^2 P_l}{\d t^\alpha\d t^\beta},\quad 1\le\alpha,\beta\le N.
$$
The space of solutions of the differential equation $(f')^2=\frac{2l}{2l+1}ff''$ for a function $f=f(x)$ is formed by the family $f=C_1(x+C_2)^{-2l}$, $C_1,C_2\in\mbC^*$, together with the constant solution $f=C$, $C\in\mbC$. Since $P_l$ is a non-constant polynomial, we come to a contradition. Therefore, $l=0$. 

In this case system~\eqref{eq:open WDVV,2} is equivalent to the system
$$
c^\gamma_{\alpha\beta}\frac{\d P_0}{\d t^\gamma}=\frac{\d P_0}{\d t^\alpha}\frac{\d P_0}{\d t^\beta},\quad 1\le\alpha,\beta\le N.
$$
For $\alpha=2$ we get the relations
$$
\frac{\d P_0}{\d t^{\beta+1}}+\sum_{1\le\gamma\le\beta}c^\gamma_{2,\beta}\frac{\d P_0}{\d t^\gamma}=\frac{\d P_0}{\d t^2}\frac{\d P_0}{\d t^\beta},\quad 2\le\beta\le N-1,
$$
that recursively determine all the derivatives $\frac{\d P_0}{\d t^\beta}$ starting from the derivatives $\frac{\d P_0}{\d t^2}=t^N$ and $\frac{\d P_0}{\d t^1}=1$. This completely determines the polynomial $P_0$. We conclude that $F^o=\left.\left(\lambda^{-1}F^o_{B_N}(t^1,\ldots,t^N,\lambda s)\right)\right|_{\lambda=0}$. 

\bigskip

\noindent{\it Case $W=I_2(k)$}. We have $q_1=1$, $q_2=\frac{2}{k}$, $\delta=\frac{k-2}{k}$ and $F_{I_2(k)}=\frac{(t^1)^2t^2}{2}+\alpha_k\frac{(t^2)^{k+1}}{(k+1)!}$, $\alpha_k\ne 0$. The function $F^o_{I_2(k)}$ satisfies the open WDVV equations together with equations~\eqref{eq:unit condition for Fo} and \eqref{eq:homogeneity for open Coxeter}, because $\<\tau_{\alpha_1}\ldots\tau_{\alpha_n}\tau_\mu\>_{A_{k-1}}=0$, if $\alpha_i\in\{1,k-1\}$ and $\mu\notin\{1,k-1\}$~\cite[Section~1]{Zub94}.

Note that if a function $F^o$ satisfies property~\eqref{eq:unit condition for Fo}, then all the open WDVV equations are automatically satisfied except equation~\eqref{eq:open WDVV,2} with $\alpha=\beta=2$.

Suppose $k=2l+1$, $l\ge 1$. A polynomial $F^o(t^1,t^2,s)$, satisfying~\eqref{eq:unit condition for Fo} and~\eqref{eq:homogeneity for open Coxeter}, has the form
$$
F^o=t^1 s+\sum_{i=0}^{l+1}\beta_i\frac{s^{2l+2-2i}}{(2l+2-2i)!}\frac{(t^2)^i}{i!},\quad\beta_i\in\mbC.
$$
Suppose that the open WDVV equations are satisfied. Equation~\eqref{eq:open WDVV,2} with $\alpha=\beta=2$ is equivalent to
\begin{align*}
&\frac{\d^2 F_{I_2(2l+1)}}{\d(t^2)^3}+\frac{\d^2 F^o}{\d(t^2)^2}\frac{\d^2 F^o}{\d s^2}-\left(\frac{\d^2 F^o}{\d t^2\d s}\right)^2=0\Leftrightarrow\\
\Leftrightarrow & \alpha_{2l+1}\frac{(t^2)^{2l-1}}{(2l-1)!}+\left(\sum_{i=2}^{l+1}\beta_i\frac{s^{2l+2-2i}}{(2l+2-2i)!}\frac{(t^2)^{i-2}}{(i-2)!}\right)\left(\sum_{i=0}^l\beta_i\frac{s^{2l-2i}}{(2l-2i)!}\frac{(t^2)^i}{i!}\right)\\
&-\left(\sum_{i=1}^l\beta_i\frac{s^{2l+1-2i}}{(2l+1-2i)!}\frac{(t^2)^{i-1}}{(i-1)!}\right)^2=0.
\end{align*}
The expression on the left-hand side of the last equation has the form $\sum_{i=0}^{2l-1}(t^2)^{2l-1-i}s^{2i}P_i(\beta_0,\ldots,\beta_{l+1})$, where
\begin{gather*}
P_0=\frac{\alpha_{2l+1}}{(2l-1)!}+\frac{\beta_{l+1}\beta_l}{(l-1)!l!},\qquad P_i=\frac{\beta_{l+1}\beta_{l-i}}{(l-1)!(l-i)!(2i)!}+Q_i(\beta_{l-i+1},\ldots,\beta_{l+1}),\quad 1\le i\le l,
\end{gather*}
and $Q_i$ are polynomials in $\beta_{l-i+1},\ldots,\beta_{l+1}$. We see that $\beta_{l+1}\ne 0$ and the equations $P_i=0$, $0\le i\le l$, determine the coefficients $\beta_0,\ldots,\beta_l$ in terms of the coefficient $\beta_{l+1}$. Thus, $F^o=\lambda^{-1}F^o_{I_2(2l+1)}(t^1,t^2,\lambda s)$ for some $\lambda\ne 0$.

Suppose $k=2l$, $l\ge 2$, and a polynomial $F^o$ satisfies the open WDVV equations together with equations~\eqref{eq:unit condition for Fo} and~\eqref{eq:homogeneity for open Coxeter}. Then $F^o$ has the form
$$
F^o=t^1 s+\sum_{i=0}^l\beta_i\frac{s^{2l+1-2i}}{(2l+1-2i)!}\frac{(t^2)^i}{i!},\quad\beta_i\in\mbC,
$$
and equation~\eqref{eq:open WDVV,2} with $\alpha=\beta=2$ is equivalent to
\begin{align*}
& \alpha_{2l}\frac{(t^2)^{2l-2}}{(2l-2)!}+\left(\sum_{i=2}^l\beta_i\frac{s^{2l+1-2i}}{(2l+1-2i)!}\frac{(t^2)^{i-2}}{(i-2)!}\right)\left(\sum_{i=0}^{l-1}\beta_i\frac{s^{2l-1-2i}}{(2l-1-2i)!}\frac{(t^2)^i}{i!}\right)\\
&-\left(\sum_{i=1}^l\beta_i\frac{s^{2l-2i}}{(2l-2i)!}\frac{(t^2)^{i-1}}{(i-1)!}\right)^2=0.
\end{align*}
The expression on the left-hand side has the form $\sum_{i=0}^{2l-2}(t^2)^{2l-2-i}s^{2i}P_i(\beta_0,\ldots,\beta_l)$, where
\begin{align*}
P_0=&\frac{\alpha_{2l}}{(2l-2)!}-\frac{\beta_l^2}{((l-1)!)^2},\\
P_i=&\gamma_i\beta_l\beta_{l-i}+Q_i(\beta_{l-i+1},\ldots,\beta_l),\quad \gamma_i=\frac{2l(i-1)}{(l-1)!(2i)!(l-i)!},\quad 1\le i\le l,
\end{align*}
and $Q_i$ are polynomials in $\beta_{l-i+1},\ldots,\beta_l$. We see that the equation $P_0=0$ determines $\beta_l$ up to a sign and then the equations $P_i=0$, $2\le i\le l$, determine the coefficients $\beta_0,\ldots,\beta_{l-2}$ in terms of $\beta_{l-1}$. Thus, $F^o=\lambda^{-1}F^{o,\pm}_{I_2(2l)}(t^1,t^2,\lambda s)$ for some $\lambda$.
\end{proof}


\begin{thebibliography}{BCT18}

\bibitem[Bur18]{Bur18} A. Buryak. {\it Extended $r$-spin theory and the mirror symmetry for the $A_{r-1}$-singularity}. arXiv:1802.07075v2.

\bibitem[BCT18]{BCT18} A. Buryak, E. Clader, R. J. Tessler. {\it Open $r$-spin theory and the Gelfand--Dickey wave function}. arXiv:1809.02536v3.

\bibitem[BCT19]{BCT19} A. Buryak, E. Clader, R. J. Tessler. {\it Closed extended $r$-spin theory and the Gelfand--Dickey wave function}. Journal of Geometry and Physics 137 (2019), 132--153.

\bibitem[Dub96]{Dub96} B. Dubrovin. {\it Geometry of 2D topological field theories}. Integrable systems and quantum groups (Montecatini Terme, 1993), 120--348, Lecture Notes in Math., 1620, Fond. CIME/CIME Found. Subser., Springer, Berlin, 1996. 

\bibitem[Dub98]{Dub98} B. Dubrovin. {\it Differential geometry of the space of orbits of a Coxeter group}. Surveys in Differential Geometry~4 (1998), 181--211.

\bibitem[Dub99]{Dub99} B. Dubrovin. {\it Painlev\'e transcendents in two-dimensional topological field theory}. The Painlev\'e property, 287--412, CRM Ser. Math. Phys., Springer, New York, 1999.

\bibitem[FJR13]{FJR13} H. Fan, T. Jarvis, Y. Ruan. {\it The Witten equation, mirror symmetry, and quantum singularity theory}. Annals of Mathematics~178 (2013), 1--106.

\bibitem[Hert02]{Hert02} C. Hertling. {\it Frobenius manifolds and moduli spaces for singularities}. Cambridge Tracts in Mathematics, 151. Cambridge University Press, Cambridge, 2002.

\bibitem[HS12]{HS12} A. Horev, J. P. Solomon. {\it The open Gromov--Witten--Welschinger theory of blowups of the projective plane}. arXiv:1210.4034v1.

\bibitem[JKV01a]{JKV01a} T. J. Jarvis, T. Kimura, A. Vaintrob. {\it Moduli spaces of higher spin curves and integrable hierarchies}. Compositio Mathematica 126 (2001), no. 2, 157--212.

\bibitem[JKV01b]{JKV01b} T. J. Jarvis, T. Kimura, A. Vaintrob. {\it Gravitational descendants and the moduli space of higher spin curves}. Advances in algebraic geometry motivated by physics (Lowell, MA, 2000), 167--177, Contemp. Math., 276, Amer. Math. Soc., Providence, RI, 2001. 

\bibitem[LVX17]{LVX17} K. Liu, R. Vakil, H. Xu. {\it Formal pseudodifferential operators and
Witten’s $r$-spin numbers}. Journal f\"ur die reine und angewandte Mathematik 728 (2017), 1--33.

\bibitem[Man99]{Man99} Yu. I. Manin. {\it Frobenius manifolds, quantum cohomology, and moduli spaces}. American Mathematical Society Colloquium Publications, 47. American Mathematical Society, Providence, RI, 1999.

\bibitem[Man05]{Man05} Yu. I. Manin. {\it F-manifolds with flat structure and Dubrovin's duality}. Advances in Mathematics 198 (2005), no. 1, 5--26.

\bibitem[NY98]{NY98} M. Noumi, Y. Yamada. {\it Notes on the flat structures associated with simple and simply elliptic singularities}. Integrable systems and algebraic geometry (Kobe/Kyoto, 1997), 373--383, World Sci. Publ., River Edge, NJ, 1998.

\bibitem[PPZ16]{PPZ16} R. Pandharipande, A. Pixton, D. Zvonkine. {\it Tautological relations via $r$-spin structures}. arXiv:1607.00978v2.

\bibitem[PST14]{PST14} R. Pandharipande, J. P. Solomon, R. J. Tessler. {\it Intersection theory on moduli of disks, open KdV and Virasoro}. arXiv:1409.2191v2.

\bibitem[Sai82]{Sai82} K. Saito. {\it Primitive forms for a universal unfolding of a function with an isolated critical point}. Journal of the Faculty of Science, University of Tokyo, Section IA, Mathematics~28 (1981), no.~3, 775--792 (1982).

\bibitem[Sai83]{Sai83} K. Saito. {\it Period mapping associated to a primitive form}. Kyoto University, Research Institute for Mathematical Sciences, Publications~19 (1983), no.~3, 1231--1264.

\bibitem[Zub94]{Zub94} J.-B. Zuber. {\it On Dubrovin topological field theories}. Modern Physics Letters A~9 (1994), no.~8, 749--760.

\end{thebibliography}
\end{document}